\newtheorem{thm}{Theorem}[section]
\newtheorem{lem}[thm]{Lemma}
\newtheorem{cor}[thm]{Corollary}
\newtheorem{prop}[thm]{Proposition}
\newtheorem{rem}[thm]{Remark}
\newtheorem{defn}[thm]{Definition}
\newcommand{\Tr}{{\rm Tr}}
\newcommand{\Trn}{{\rm Tr}_n}
\newcommand{\Trm}{{\rm Tr}_m}
\def\+{\oplus}
\def\cB{{\mathcal B}}
\def\cC{{\mathfrak C}}
\def\caC{{\mathcal C}}
\def\cW{{\mathcal W}}
\def\F{{\mathbb F}}
\def\Z{{\mathbb Z}}
\def\F{{\mathbb F}}
\def\Z{{\mathbb Z}}
\def\V{{\mathbb V}}
\def\xx{{\mathbf x}}
\def\00{{\bf 0}}
\def\11{{\bf 1}}
\def\+{\oplus}
\def\\{\cr}
\def\({\left(}
\def\){\right)}
\newcommand{\bwht}[2]{\mathcal{W}_{#1}(#2)}
\newcommand{\vwht}[3]{\mathcal{W}_{#1}(#2,#3)}
\newcommand{\cardinality}[1]{\# #1}
\providecommand{\newoperator}[3]{%
  \newcommand*{#1}{\mathop{#2}#3}}
\newoperator{\FD}{\mathrm{FD}}{\nolimits}
\begin{document}
\title{\bf $C$-differential bent functions and perfect nonlinearity}
\author{Pantelimon~St\u anic\u a$^1$, Sugata Gangopadhyay$^2$, Aaron Geary$^1$, \and Constanza Riera$^3$, Anton Tkachenko$^3$
 \vspace{0.15cm} \\
 \small  $^1$ Applied Mathematics Department, \\
\small Naval Postgraduate School, Monterey, USA; \\
\small {\tt \{pstanica, aaron.geary\}@nps.edu}\\
\small $^2$ Department of Computer Science and Engineering,\\
\small Indian Institute of Technology Roorkee,
\small  INDIA;\\ 
\small \tt{sugata.gangopadhyay@cs.iitr.ac.in}\\
\small $^3$Department of Computer Science,\\
\small  Electrical Engineering and Mathematical Sciences,\\
\small    Western Norway University of Applied Sciences,\\
\small   5020 Bergen, Norway;  {\tt \{csr, atk\}@hvl.no}\\}

\maketitle

\begin{abstract}
 Drawing inspiration from Nyberg's paper~\cite{Nyb91}  on perfect nonlinearity and the $c$-differential notion we defined in~\cite{EFRST20}, in this paper we introduce 
 the concept of $c$-differential bent functions in two different ways (thus extending Kumar et al.~\cite{Ku85} classical definition). We further extend the notion of perfect $c$-nonlinear introduced in~\cite{EFRST20}, also in two different ways,   and show that, in both cases, the concepts of $c$-differential bent and perfect $c$-nonlinear are equivalent (under some natural restriction of the parameters). Some constructions of functions with these properties are also provided; one such construction provides a large class of PcN functions with respect to all $c$ in some subfield of the field under consideration. We also show that  both our classes of $0$-differential bents are supersets of  permutation polynomials, and that Maiorana-McFarland bent functions are not differential bent (of the first kind).
\end{abstract}
{\bf Keywords:} 
Boolean and $p$-ary function, 
autocorrelation,
$c$-differential bent,  
differential uniformity,
perfect and almost perfect $c$-nonlinearity
\newline
{\bf MSC 2000}: 06E30, 11T06, 94A60, 94C10.


\section{Introduction and basic definitions}
 
 We will introduce here only some basic notations and definitions on Boolean and $p$-ary functions (where $p$ is an odd prime); the reader can consult~\cite{Bud14,CH1,CH2,CS17,MesnagerBook,Tok15} for more on these objects.

For a positive integer $n$ and $p$ a prime number, we denote by $\F_p^n$ the $n$-dimensional vector space over $\F_p$, and by $\F_{p^n}$ the  finite field with $p^n$ elements, while $\F_{p^n}^*=\F_{p^n}\setminus\{0\}$ will denote the multiplicative group. For $a\neq 0$, we often write $\frac{1}{a}$ to mean the inverse of $a$ in the multiplicative group of the  finite field under discussion. 
We use $\cardinality{S}$ to denote the cardinality of a set $S$ and $\bar z$, for the complex conjugate.
We call a function from $\F_{p^n}$ (or $\F_p^n$) to $\F_p$  a {\em $p$-ary  function} on $n$ variables. For positive integers $n$ and $m$, any map $F:\F_{p^n}\to\F_{p^m}$ (or, $\F_p^n\to\F_p^m$)  is called a {\em vectorial $p$-ary  function}, or {\em $(n,m)$-function}. When $p$ is fixed, we write $\V_{n,p}$ for the vector space $\F_{p^n}$, or $\F_{p}^n$ under consideration, and $\cB_{n,p}^m$, for the $p$-ary functions defined on $\V_{n,p}$ with values in $\V_{m,p}$. If $p=2$ we write $\V_n$ and $\cB_n^m$, and if $m=1$, we will drop the superscript, altogether.
When $m=n$, $F$ can be uniquely represented as a univariate polynomial over $\F_{p^n}$ (using some identification, via a basis, of the finite field with the vector space) of the form
$
F(x)=\sum_{i=0}^{p^n-1} a_i x^i,\ a_i\in\F_{p^n},
$
whose {\em algebraic degree}   is then the largest Hamming weight of the exponents $i$ with $a_i\neq 0$. To (somewhat) distinguish between the vectorial and single-component output, we shall use upper/lower case to denote the functions.
For a $p$-ary function $f:\F_{p^n}\to \F_p$,  the {\it Walsh-Hadamard transform} is defined as the complex-valued function
\[
\bwht{f}{u}  = \sum_{x\in \F_{p^n}}\zeta_p^{f(x)-\Trn(u x)}, \ u \in \mathbb{F}_{p^n},
\]
where $\zeta_q= e^{\frac{2\pi i}{q}}$, for any $q$, and $\Trn:\F_{p^n}\to \F_p$ is the absolute trace function, given by $\displaystyle \Trn(x)=\sum_{i=0}^{n-1} x^{p^i}$ (we will denote it by $\Tr$, if the dimension is clear from the context).  For $f\in\cB_{n,p}$, the map $\displaystyle\mathcal{F}_f(u ) = \sum_{\xx  \in \V_{n}} f(\xx)\zeta_p^{\Tr(ux)}$ is the Fourier transform of~$f$.
 The (vectorial) Walsh transform $\vwht{F}{a}{b}$ of an $(n,m)$-function $F:\F_{p^n}\to \F_{p^m}$ at $a\in \F_p^n, b\in \F_p^m$ is the Walsh-Hadamard transform of its component function ${\rm Tr}_m(bF(x))$ at $a$, that is,
\[
  \vwht{F}{a}{b}=\sum_{x\in\F_{p^n}} \zeta_p^{\Trm(bF(x))-\Trn(ax)}.
\]
NB: If one wishes to work with vector spaces, then one can replace the $\Tr$ by any scalar product on that environment, for example, if $\V_{n}=\F_p^n$, the vector space of the $n$-tuples over $\F_p$ we use the conventional dot product $u\cdot x$ for $\Tr(ux)$.
 
 In this paper, we will use both the absolute trace $\Trn$ and the relative trace $\Tr_{\F_{p^n}/\F_{p^m}}$, defined as $\Tr_{\F_{p^n}/\F_{p^m}} (x)=\sum_{i=0}^{\frac{n}{m}-1} x^{p^{mi}}$.
 
Given a $p$-ary  function $f\in\cB_{n,p}$, the derivative of $f$ with respect to~$a \in \F_{p^n}$ is the $p$-ary  function
$
 D_{a}f(x) =  f(x + a)- f(x), \mbox{ for  all }  x \in \F_{p^n}.
$

The sum $$\caC_{f,g}(z)=\sum_{x \in \V_{n}} \zeta_p^{f(x+z)  - g(x )}$$
is  the {\em crosscorrelation} of $f,g\in\cB_{n,p}$  at $z \in \V_{n}$.
The {\em autocorrelation} of $f \in \cB_{n,p}$ at $u  \in \V_{n}$
is $\caC_{f,f}(u )$ above, which we denote by $\caC_f(u )$.

For an $(n,m)$-function  
$F$, and $a\in\F_{p^n},b\in\F_{p^n}$, we let $\Delta_F(a,b)=\cardinality{\{x\in\F_{p^n} : F(x+a)-F(x)=b\}}$. We call the quantity
$\delta_F=\max\{\Delta_F(a,b)\,:\, a,b\in \F_{p^n}, a\neq 0 \}$ the {\em differential uniformity} of $F$. If $\delta_F= \delta$, then we say that $F$ is differentially $\delta$-uniform. If $m=n$ and $\delta=1$, then $F$ is called a {\em perfect nonlinear} ({\em PN}) function, or {\em planar} function. If  $m=n$ and $\delta=2$, then $F$ is called an {\em almost perfect nonlinear} ({\em APN}) function. It is well known that PN functions do not exist if $p=2$. While most of the literature deals with $(n,n)$-functions when it comes to differential uniformity, we see no reason why the concept (beyond its uses in $S$-boxes, of course) cannot be considered for all $(n,m)$-functions.

In~\cite{EFRST20} we  defined a multiplier differential and the corresponding difference distribution table   (in any characteristic).
 For an $(n,m)$-function $F$, and $a\in\F_{p^n},b\in\F_{p^m}$, 
 and $c\in\F_{p^m}$, the ({\em multiplicative}) {\em $c$-derivative} of $F$ with respect to~$a \in \F_{p^n}$ is the  function
\[
 _cD_{a}F(x) =  F(x + a)- cF(x), \mbox{ for  all }  x \in \F_{p^n}.
\]
We let the entries of the $c$-Difference Distribution Table ($c$-DDT) be defined by ${_c\Delta}_F(a,b)=\cardinality{\{x\in\F_{p^n} : F(x+a)-cF(x)=b\}}$. We call the quantity
\[
\delta_{F,c}=\max\left\{{_c\Delta}_F(a,b)\,|\, a\in \F_{p^n}, b\in\F_{p^m} \text{ and } a\neq 0 \text{ if $c=1$} \right\}\]
the {\em $c$-differential uniformity} of~$F$ (while we previously worked with $(n,n)$-functions, there is no reason why we should not consider general $(n,m)$-functions in this definition). We extend here for general $n$ and $m$ the concepts that,  in~\cite{EFRST20}, were defined for $m=n$:

 If $\delta_{F,c}=\delta$, then we say that $F$ is differentially $(c,\delta)$-uniform (or that $F$ has $c$-uniformity $\delta$, or for short, 
{\em $F$ has $\delta$-uniform $c$-DDT}). If $\delta=1$, then $F$ is called a {\em perfect $c$-nonlinear} ({\em PcN}) function (certainly, for $c=1$, they only exist for odd characteristic $p$; however, as proven in~\cite{EFRST20}, there exist PcN functions for $p=2$, for all  $c\neq1$). If $\delta=2$, then $F$ is called an {\em almost perfect $c$-nonlinear} ({\em APcN}) function. 
When we need to specify the constant $c$ for which the function is PcN or APcN, then we may use the notation  
$c$-PN, or $c$-APN.
It is easy to see that if $F$ is an $(n,n)$-function, that is, $F:\F_{p^n}\to\F_{p^n}$, then $F$ is $c$-PN if and only if $_cD_a F$ is a permutation polynomial.

The rest of the paper is organized as follows. 
Section~\ref{sec2} and \ref{sec3} introduce our two types of crosscorrelations/autocorrelations and define  (naturally) the concepts of perfect $c$-nonlinear and $c$-differential bent functions in the context of $(n,m)$-functions, and  show that $c$-differential bent functions correspond to perfect $c$-nonlinear functions (we use indices $1,2$ to specify which type of bentness or perfect nonlinearity we refer to).  Characterizations and some constructions of both concepts are provided. Section~\ref{sec4}  concludes the paper.

\section{The first crosscorrelation: $c$-differential bent$_1$ and perfect$_1$ $c$-nonlinear functions}
\label{sec2}

In this section we extend the PcN notion to allow arbitrary $p$-ary $(n,m)$-functions. We shall recover some results shown in~\cite{EFRST20, RS20} as particular cases.

As for the regular differentials, for $F\in\cB_{n,p}^m$ and fixed $c\in\V_m $, we define the {\em $c$-crosscorrelation} at $u\in\F_{p^n}, b\in\F_{p^m}$ by 
\[
{_c}\cC_{F,G}(u,b)=\sum_{x \in \F_{p^n}} \zeta_{p}^{\Trm(b(F(x+u)  - c G(x)))}
\]
and the corresponding {\em $c$-autocorrelation} at $z\in\F_{p^n}$, ${_c}\cC_{f}={_c}\cC_{f,f}$. Surely, ${_c}\cC_{F,G}(u,b)=\caC_{\Trm(bF),\Trm(bcG)}(u)$ 
and ${_c}\cC_{F}(u,b)=\caC_{\Trm(bF),\Trm(bcF)}(u)$ ($b$ can only be $0,1$ when $m=1$). 
We want to emphasize the $c$-differentials, which is going to be relevant later as it relates to the perfect $c$-nonlinear concept. (We do not want to complicate more the notation by using indices here, since it will be obvious which concept we refer to, because this autocorrelation has two input variables, while the second concept has only one input variable.)

Nyberg~\cite{Nyb91} extended the notion of perfect nonlinearity and called a function perfect nonlinear if its derivatives are balanced (i.e. they take every value the same number of times). Thus, the function's (non-trivial) autocorrelation must be zero. Likewise,  we now extend the definition of PcN, in the following way.
\begin{defn}
For arbitrary positive integers $m,n$, and $F$ an $(n,m)$-function and $c\in\F_{p^m}$ fixed,  we say that $F$ is {\em perfect$_1$ $c$-nonlinear} \textup{(}$PcN$\textup{)} if its $c$-autocorrelation 
${_c}\cC_F(u,b)=0$, for all $u\in \F_{p^n}^*$, $b\in\F_{p^m}^*$. 
A {\em strictly perfect$_1$ $c$-nonlinear} is a function $F$ for which all ${_c}\cC_F(u,b)=0$, 
for all $u \in \F_{p^n}$, $b \in \F_{p^n}^*$ \textup{(}obviously, strictly perfect$_1$ $c$-nonlinear functions do not exist for $c=1$\textup{)}. 
\end{defn}
NB: We removed $b=0$ from the domain, since in that case the autocorrelation of any function is constant,~$p^n$.

Surely, if the  
$c$-derivatives are balanced, that is, if
 ${_c}D_aF(x)=F(x+a)-cF(x)$, at every fixed $a\neq 0$, assumes  the same value $y\in\F_{p^m}$ for exactly  $p^{n-m}$ values of $x\in\F_{p^n}$,  
 then $F$ is perfect$_1$ $c$-nonlinear (similarly, at every fixed $a$ for  strictly perfect$_1$ $c$-nonlinear functions).
 Later we show that a function is perfect$_1$ $c$-nonlinear if and only if the traces of the $c$-differentials are balanced. It is clear that   PcN functions (for $m=n$) are strictly perfect$_1$ $c$-nonlinear functions, and of course, one wonders about the converse (again, for $n=m$). If all the traces of multiples of $c$-differentials are balanced and so,  for all $u\neq0$, the sum
 \[
 \sum_{x\in\F_{p^n}} \chi_b\left({_c}D_uF(x)\right)=0,
 \]
 for all $b\neq 0$, where $\chi_b(x)=\chi(bx)$ and $\chi$ is the canonical additive character of $\F_{p^n}$, then, by~\cite[Theorem 7.7]{LN97}, ${_c}D_uF (x)$ must be a permutation, hence $F$ is PcN.

 A known result for classical Boolean functions, was extended in~\cite{smgs} for generalized Boolean functions (that is, functions defined from $\V_n$ into $\mathbb{Z}_q$, where $q=2^k$), and a corresponding result connecting our definition of $c$-crosscorrelation to the Walsh transforms  of  general  $p$-ary functions, holds, as well. 
\begin{lem}
\label{lem1}
Let $p$ be a prime number and $m,n$ be nonzero positive integers. If $F,G\in\cB_{n,p}^m$ and $c\in\F_{p^m}$, then for all $b\in F_{p^m}$, we have
\begin{equation}
\begin{split}
\label{eq3}
& \sum_{ u  \in \F_{p^n}}{_c}\cC_{F,G}(u,b )\zeta_p^{-\Trn(ux)} =  \cW_F(x,b )\overline{\cW_{G}(x,bc)}, \\
& {_c}\cC_{F,G}(u,b ) = p^{-n}\sum_{x  \in \F_{p^n}}\cW_F(x,b )\overline{\cW_{G}(x,bc)}\zeta^{\Trn(ux)}.
\end{split}
\end{equation}
In particular, if $F = G$, then
\begin{align*}
&\sum_{ u  \in \F_{p^n}}{_c}\cC_{F}(u,b)\zeta_p^{-\Trn(ux)} =  \cW_F(x,b )\overline{\cW_{F}(x,bc )}\\
& {_c}\cC_F(u,b ) = p^{-n}\sum_{x  \in \F_{p^n}}\cW_F(x,b) \overline{\cW_{F}(x ,bc)}\zeta_p^{\Trn(ux)}.
\end{align*}
\end{lem}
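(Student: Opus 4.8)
The plan is to prove the first identity by a direct expansion of the $c$-crosscorrelation followed by an interchange of summations and a linear change of variables, and then to read off the second identity as its Fourier-inversion counterpart. Concretely, I would rename the summation variable in ${_c}\cC_{F,G}$ from $x$ to $y$ (to avoid a clash with the $x$ appearing in the Walsh transforms) and write
\[
\sum_{u\in\F_{p^n}} {_c}\cC_{F,G}(u,b)\,\zeta_p^{-\Trn(ux)} = \sum_{u\in\F_{p^n}}\sum_{y\in\F_{p^n}} \zeta_p^{\Trm(bF(y+u))-\Trm(bcG(y))-\Trn(ux)}.
\]
For each fixed $y$, the substitution $z=y+u$ is a bijection of $\F_{p^n}$, and using $\F_p$-linearity of $\Trn$ the exponent becomes $\bigl(\Trm(bF(z))-\Trn(zx)\bigr)+\bigl(-\Trm(bcG(y))+\Trn(yx)\bigr)$, so the double sum factors as
\[
\Big(\sum_{z\in\F_{p^n}}\zeta_p^{\Trm(bF(z))-\Trn(zx)}\Big)\Big(\sum_{y\in\F_{p^n}}\zeta_p^{-\Trm(bcG(y))+\Trn(yx)}\Big).
\]
The first factor is $\cW_F(x,b)$ by definition, and, since complex conjugation sends $\zeta_p$ to $\zeta_p^{-1}$, the second factor is $\overline{\cW_G(x,bc)}$; this is precisely the first displayed equality of the lemma. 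The degenerate value $b=0$ is consistent with the identity (both sides collapse to the obvious constants), so it needs no separate argument, which is why the statement is made for all $b\in\F_{p^m}$.

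For the second identity I would invoke finite Fourier inversion on the group $\F_{p^n}$: from the orthogonality relation $\sum_{x\in\F_{p^n}}\zeta_p^{\Trn(vx)}=p^n$ when $v=0$ and $0$ otherwise, any function $h\colon\F_{p^n}\to\C$ with $\widehat h(x)=\sum_{u}h(u)\zeta_p^{-\Trn(ux)}$ satisfies $h(u)=p^{-n}\sum_{x}\widehat h(x)\zeta_p^{\Trn(ux)}$. Applying this to $h(u)={_c}\cC_{F,G}(u,b)$, whose transform was just computed to be $\cW_F(x,b)\overline{\cW_G(x,bc)}$, yields the stated inversion formula. The two ``in particular'' statements then follow immediately by specializing $G=F$ in both identities.

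I do not expect a genuine obstacle here — once the change of variables $z=y+u$ is in place the computation is routine, mirroring the known crosscorrelation/Walsh identity for (generalized) Boolean functions extended in~\cite{smgs}. The only points demanding care are bookkeeping: keeping track of which of the terms $F(x+u)$ and $cG(x)$ carries the minus sign in the exponent, so that the conjugate $\overline{\cW_G(x,bc)}$ (and not $\cW_G(x,bc)$) appears and its second argument is $bc$ rather than $b$ or $c$; and checking that the substitution is a bijection of $\F_{p^n}$ for each fixed $y$, so that the summation ranges are preserved and the double sum legitimately factors. I would also note explicitly that the identity is valid for every $b\in\F_{p^m}$, including $b=0$, even though the subsequent perfect$_1$ $c$-nonlinear applications only use $b\neq 0$.
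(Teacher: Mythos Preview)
Your proposal is correct and follows essentially the same approach as the paper: expand the crosscorrelation, apply the bijective shift $z=y+u$ (the paper uses $w=z+u$), and factor the resulting double sum into $\cW_F(x,b)\overline{\cW_G(x,bc)}$. The only cosmetic difference is that for the second identity you invoke finite Fourier inversion abstractly while the paper carries out the inverse computation explicitly, but this is the same argument.
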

\begin{proof}
We start with
\allowdisplaybreaks
\begin{align*}
&\sum_{ u  \in \F_{p^n}} {_c}\cC_{F,G}(u,b)\zeta_p^{-\Tr(ux)}
=  \sum_{\substack{u  \in \F_{p^n}\\ b  \in \F_{p^m}}} \sum_{z \in \F_{p^n}} \zeta_{p}^{\Trm(b(F(z+u)  - c    G(z ))}\zeta_p^{\Trn(-ux)}\\
=& \sum_{ u  \in \F_{p^n}} \sum_{z \in \F_{p^n}} \zeta_{p}^{\Trm(b(F(z+u)  - c    G(z ))}\zeta_p^{-\Tr((z+u)x)+ \Trn(zx)}\\
= &\sum_{z  \in \F_{p^n}}  \zeta_{p}^{-\Trm(b c  G(z ))} \zeta_p^{\Trn(zx)}\sum_{u \in \F_{p^n}} \zeta_{p}^{\Trm(bF(z+u)) }\zeta_p^{-\Trn((z+u)x)}\\
&\stackrel{w:=z+u}{=} \sum_{z  \in \F_{p^n}}  \zeta_{p}^{-\Trm(bc    G(z ))} \zeta_p^{ \Tr(zx)} \sum_{w \in \F_{p^n}} \zeta_{p}^{\Trm(bF(w)) }\zeta_p^{-\Trn(wx)}\\
=&\cW_F(x,b)\overline{\cW_{G}(x,bc)}.
\end{align*}
For the second identity, we reverse the argument, and obtain
\allowdisplaybreaks
\begin{align*}
&p^{-n}\sum_{x  \in \F_{p^n}}\cW_F(x,b )\overline{\cW_{G}(x,bc )}\zeta^{\Trn(ux)}\\
=&  p^{-n}\sum_{x  \in \F_{p^n}}  \sum_{z,w  \in \F_{p^n}}  \zeta_{p}^{- \Trm(bc   G(z ))} \zeta_p^{ \Trn(zx)} \zeta_{p}^{\Trm(bF(w)) }\zeta_p^{-\Trn(wx)} \zeta^{\Trn(ux)}\\
=& p^{-n} \sum_{z,w  \in \F_{p^n}}  \zeta_{p}^{\Trm(b(F(w)- c  G(z ))}  \sum_{x  \in \F_{p^n}} \zeta_p^{ \Trn((u+z-w)x)} \\
=& \sum_{z  \in \F_{p^n}}  \zeta_{p}^{\Trm(b(F(u+z)- c   G(z ))} ={_c}\cC_F(u,b).
\end{align*}
The claimed consequences are immediate.
\end{proof}

We know that the bent notion exists from any group $A$ to another group~$B$~\cite{Pott04}, defined via character theory.  There are many generalizations of the bent concept and we  mention here~\cite{HMP18,HP15,Ku85,MMMS17,Waifi18,mms0,gbps,MRS19,Me18,smgs,txqf,ZXSZ16, ZW11}.
For example, a $p$-ary function $F:\F_{p^n}\to \F_p$ is bent if the complex absolute value of the Walsh transforms is constant, namely, $|\cW_F(x)|^2=\cW_F(x)\overline{\cW_F(x)}=\caC_F(0)=p^{n}$, for all $x\in\F_{p^n}$. In that spirit, for $m\leq n$, we define a new bent concept below that takes into account the differential type used. 
\begin{defn}
We say that a function $F\in\cB_{n,p}^m$  is {\em $c$-differential bent$_1$} if $
\displaystyle \cW_F(x,b)\overline{\cW_{F}(x,bc)}={_c}\cC_F(0,b)$, for all $x\in\F_{p^n},b\in\F_{p^m}^*$.
\end{defn}

Below, we will show that a function $F\in\cB_{n,p}^m$ is  $c$-differential bent$_1$ if the traces of all of its $c$-differentials, ${_c}D_aF$ with $a\neq 0$, are balanced, thereby extending Nyberg's result~\cite{Nyb91} on perfect nonlinearity being equivalent to bentness for functions defined from $\F_{p^n}$ into $\F_p$. We can also regard it as an extension of the PcN property we defined (for $n=m$)  in~\cite{EFRST20}.

\begin{thm}
Let $1\leq m\leq n$ be  integers, $p$ prime, and $F\in\cB_{n,p}^m$, $1\neq c\in\F_{p^m}$. Then $F$ is perfect$_1$ $c$-nonlinear if and only if $F$ is  $c$-differential bent$_1$. Moreover, $F$ is strictly perfect$_1$ $c$-nonlinear if and only if $\displaystyle \cW_F(x,b)\overline{\cW_{F}(x,bc)}=0$, 
for all $x\in\F_{p^n},b\in\F_{p^m}^*$.
\end{thm}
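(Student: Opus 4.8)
The plan is to deduce everything from Lemma~\ref{lem1}, which already packages the Fourier/Walsh duality between the $c$-autocorrelation and the products $\cW_F(x,b)\overline{\cW_F(x,bc)}$. The key observation is that the two displayed identities of the lemma say precisely that, for each fixed $b\in\F_{p^m}^*$, the function $u\mapsto {_c}\cC_F(u,b)$ and the function $x\mapsto \cW_F(x,b)\overline{\cW_F(x,bc)}$ are (up to the normalizing $p^{-n}$ and a sign in the exponent) a Fourier transform pair on $\F_{p^n}$. A Fourier transform on a finite abelian group is injective, so ${_c}\cC_F(u,b)=0$ for all $u\in\F_{p^n}^*$ is equivalent to $\cW_F(x,b)\overline{\cW_F(x,bc)}$ being \emph{constant} in $x$; and that constant must equal the value at $u=0$, namely ${_c}\cC_F(0,b)$, by plugging $x=0$ into the first identity of the lemma (which gives $\sum_{u}{_c}\cC_F(u,b)=\cW_F(0,b)\overline{\cW_F(0,bc)}$) — or more directly by plugging $u=0$ into the second identity and using that constancy lets the character sum collapse.

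First I would fix $b\in\F_{p^m}^*$ and write $h_b(u)={_c}\cC_F(u,b)$ and $H_b(x)=\cW_F(x,b)\overline{\cW_F(x,bc)}$. By the second identity of Lemma~\ref{lem1}, $h_b(u)=p^{-n}\sum_x H_b(x)\zeta_p^{\Trn(ux)}$. If $F$ is perfect$_1$ $c$-nonlinear then $h_b(u)=0$ for $u\neq 0$, so the inverse transform (first identity of the lemma) gives $H_b(x)=\sum_u h_b(u)\zeta_p^{-\Trn(ux)}=h_b(0)$ for every $x$; that is exactly the assertion $\cW_F(x,b)\overline{\cW_F(x,bc)}={_c}\cC_F(0,b)$, i.e. $F$ is $c$-differential bent$_1$. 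Conversely, if $H_b$ is the constant ${_c}\cC_F(0,b)$, then the second identity yields $h_b(u)=p^{-n}\,{_c}\cC_F(0,b)\sum_x \zeta_p^{\Trn(ux)}$, and the inner character sum is $0$ for $u\neq 0$, so $h_b(u)=0$ there, giving perfect$_1$ $c$-nonlinearity. Running this for every $b\in\F_{p^m}^*$ proves the first equivalence.

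For the ``strictly'' statement the argument is identical except that now one demands ${_c}\cC_F(u,b)=0$ for \emph{all} $u\in\F_{p^n}$, including $u=0$. With $H_b$ the Fourier transform of $h_b$ (up to normalization and sign), $h_b\equiv 0$ on all of $\F_{p^n}$ is equivalent to $H_b\equiv 0$, i.e. $\cW_F(x,b)\overline{\cW_F(x,bc)}=0$ for all $x$. Again one runs over all $b\in\F_{p^m}^*$. (The hypothesis $c\neq 1$ is exactly what makes the ``strictly'' notion nonvacuous and is implicitly what lets us not worry about the $a\neq 0$ caveat in the $c=1$ case; for the first equivalence $c\neq 1$ plays no essential role beyond matching the definitions as stated.)

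I do not expect a serious obstacle here: everything is a one-line consequence of the Fourier inversion built into Lemma~\ref{lem1} together with orthogonality of additive characters, $\sum_{u\in\F_{p^n}}\zeta_p^{\Trn(ux)}=p^n[x=0]$. The only points that need a little care are bookkeeping ones — keeping the sign convention in the exponent consistent between the two identities of the lemma, confirming that the constant value of $\cW_F(x,b)\overline{\cW_F(x,bc)}$ is identified with ${_c}\cC_F(0,b)$ rather than with some other normalization of it (this is where $H_b(x)=h_b(0)$ from the inversion formula is cleanest), and remembering to quantify over $b\in\F_{p^m}^*$ throughout so that the statement matches the definitions of perfect$_1$ $c$-nonlinear and $c$-differential bent$_1$ verbatim.
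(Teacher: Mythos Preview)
Your proposal is correct and is essentially the same argument as the paper's own proof: both directions come directly from the two identities of Lemma~\ref{lem1} together with the orthogonality relation $\sum_{x\in\F_{p^n}}\zeta_p^{\Trn(ux)}=p^n[u=0]$, and the ``strictly'' case is handled identically by allowing $u=0$. Your Fourier-pair framing with $h_b$ and $H_b$ is just a clean repackaging of exactly the computation the paper carries out.
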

\begin{proof}
We first assume that $F$ is perfect$_1$ $c$-nonlinear, and so, ${_c}\cC_F(u,b)=0$,  for all $u\in \F_{p^n}^*$ and $b\in\F_{p^m}^*$.
From Lemma~\ref{lem1}, for an arbitrary $b\in\F_{p^m}^*$, we compute
\allowdisplaybreaks
\begin{align*}
 &\cW_F(x,b)\overline{\cW_{F}(x,bc)}
 =\sum_{u  \in \F_{p^n}}{_c}\cC_{F}(u,b )\zeta_p^{-\Trn(ux)} \\
 = & {_c}\cC_F(0,b)+\sum_{0\neq u  \in \F_{p^n}} \zeta_p^{-\Trn(ux)}  {_c}\cC_F(u,b)\\
 =&{_c}\cC_F(0,b),
 \end{align*}
 where we used 
 the assumption that the $c$-autocorrelations ${_c}\cC_F(u,b)$ are zero, except,  possibly, at $u=0$.
 
 For the reciprocal, we  assume that $F$ is $c$-differential bent$_1$, that is, $\cW_F(x ,b) \overline{\cW_{F}(x,b c)}={_c}\cC_F(0,b)$, $b\neq 0$.   Then, for any $ b\in\F_{p^m}^*$ and $u\in\F_{p^n}^*$,
 \allowdisplaybreaks
 \begin{align*}
{_c}\cC_F(u,b ) &=  p^{-n}\sum_{x  \in \F_{p^n}}\cW_F(x ,b) \overline{\cW_{F}(x,b c)}\zeta_p^{\Trn(ux)}\\
& =p^{-n}{_c}\cC_F(0,b) \sum_{x  \in \F_{p^n}} \zeta_p^{\Trn(ux)}=0,
 \end{align*}
 where we used the same property that the exponential sum of a balanced function (in this case $\Tr(ux)$, for $u\neq0$) is zero. 
 This proves the first claim. The second claim follows easily using the equations above.
 \end{proof}

 We now discuss some of the differential properties of a perfect$_1$ $c$-nonlinear function.

 \begin{thm}
 \label{thm:bal_diff}
 Let $m,n$ be positive integers, $p$ a prime integer, $F\in\cB_{n,p}^m$, and $c\in\F_{p^m}$ fixed. Then $F$ is a perfect$_1$ $c$-nonlinear function \textup{(}$c$-differential bent$_1$\textup{)} if and only if, for all $b\neq 0,u\neq 0$ fixed, $x\mapsto \Trm(b(F(x+u)-cF(x))$ is balanced.
 \end{thm}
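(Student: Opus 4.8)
The plan is to reduce the statement to a character-sum criterion via the definition of perfect$_1$ $c$-nonlinearity, and then apply a standard orthogonality argument. Recall that $F$ is perfect$_1$ $c$-nonlinear exactly when ${_c}\cC_F(u,b)=0$ for all $u\neq 0$, $b\neq 0$, and that by definition ${_c}\cC_F(u,b)=\sum_{x\in\F_{p^n}}\zeta_p^{\Trm(b(F(x+u)-cF(x)))}=\sum_{x\in\F_{p^n}}\zeta_p^{g_{u,b}(x)}$, where I write $g_{u,b}(x):=\Trm(b(F(x+u)-cF(x)))\in\F_p$ for the $p$-ary function in question. So the theorem is really the assertion that for a fixed $g_{u,b}:\F_{p^n}\to\F_p$, we have $\sum_{x}\zeta_p^{g_{u,b}(x)}=0$ if and only if $g_{u,b}$ is balanced. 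Since the equivalence with $c$-differential bent$_1$ was already established in the previous theorem, I only need to prove this $p$-ary balancedness criterion for each fixed pair $(u,b)$ with $u\neq 0$, $b\neq 0$.

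First I would prove the easy direction: if $g_{u,b}$ is balanced, then each fiber $g_{u,b}^{-1}(j)$, $j\in\F_p$, has size $p^{n-1}$, so $\sum_{x}\zeta_p^{g_{u,b}(x)}=p^{n-1}\sum_{j=0}^{p-1}\zeta_p^{j}=0$ because $\zeta_p$ is a primitive $p$-th root of unity. For the converse, I would set $N_j:=\cardinality{g_{u,b}^{-1}(j)}$ and observe that $\sum_{j}N_j=p^n$ and $\sum_{j}N_j\zeta_p^{j}=0$. The key step is then to invoke the fact that the minimal polynomial of $\zeta_p$ over $\Q$ is $1+t+\cdots+t^{p-1}$, so the only $\Q$-linear (equivalently $\Z$-linear) relation among $1,\zeta_p,\dots,\zeta_p^{p-1}$ is the all-ones relation up to scalar; hence $N_0=N_1=\cdots=N_{p-1}$, forcing each $N_j=p^{n-1}$, i.e. $g_{u,b}$ is balanced. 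Alternatively, and perhaps more in the spirit of the paper, I could extract balancedness from orthogonality: for each $k\in\F_p$, $\sum_{x}\zeta_p^{k g_{u,b}(x)}=\sum_{x}\zeta_p^{\Trm(kb(F(x+u)-cF(x)))}={_c}\cC_F(u,kb)$; if $b$ ranges over $\F_{p^m}^*$ then so does $kb$ for $k\neq 0$, so the hypothesis ${_c}\cC_F(u,b)=0$ for all nonzero $b$ already gives $\sum_x\zeta_p^{k g_{u,b}(x)}=0$ for all $k\neq 0$, and then $N_j=p^{-1}\sum_{k\in\F_p}\zeta_p^{-kj}\sum_x\zeta_p^{k g_{u,b}(x)}=p^{-1}\cdot p^n=p^{n-1}$ by the standard indicator-extraction identity.

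One subtlety to watch: the statement is phrased for a \emph{fixed} pair $(u,b)$, so I should make sure the argument does not secretly need all $b$. The cyclotomic-polynomial argument above is genuinely pointwise in $(u,b)$ — from $\sum_j N_j\zeta_p^j=0$ alone one recovers $N_0=\cdots=N_{p-1}$ — so it cleanly matches the quantifier structure ``for all $b\neq 0, u\neq 0$ fixed, the map is balanced.'' The orthogonality route implicitly uses that $kb$ stays nonzero, which is automatic since $\F_{p^m}$ is a field and $k\in\F_p^*\subseteq\F_{p^m}^*$; I would mention this one-line observation so the reader sees why looping $k$ over $\F_p^*$ is legitimate. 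The two directions then assemble into the iff, and the parenthetical ``$c$-differential bent$_1$'' is handled by citing the preceding theorem's equivalence. I expect the main (modest) obstacle to be stating the cyclotomic fact cleanly — that a nonnegative integer combination of $1,\zeta_p,\dots,\zeta_p^{p-1}$ vanishes only when all coefficients are equal — rather than any real computation; everything else is orthogonality bookkeeping already rehearsed in Lemma~\ref{lem1} and the previous theorem.
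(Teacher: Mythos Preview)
Your proposal is correct and your primary argument---writing ${_c}\cC_F(u,b)=\sum_{j=0}^{p-1}N_j\zeta_p^{j}$ and using that the only rational linear relation among $1,\zeta_p,\dots,\zeta_p^{p-1}$ is the cyclotomic one---is exactly the paper's approach (the paper phrases it by eliminating $\zeta_p^{p-1}$ and invoking linear independence of $\{\zeta_p^j:0\le j\le p-2\}$, which is the same fact). Your alternative orthogonality route via $k\in\F_p^*$ is a valid extra argument the paper does not include, and your observation that the cyclotomic argument is genuinely pointwise in $(u,b)$ while the orthogonality one needs the full quantifier over $b$ is accurate and worth keeping.
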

 \begin{proof}
  With $c\in\F_{p^n}$ constant, for every $u\in\F_{p^n}, b\in\F_{p^m}$, $0\leq j\leq p-1$, we let  $S_{j,c}^{u,b}=\{x\in\F_{p^n}\,|\, \Trm(b(F(x+u)-cF(x)))=j\}$.  We will use below that the order of the cyclotomic polynomial of index $p^m$ is  $\phi(p^m)=p^{m-1}(p-1)$.
 
 First, recall that the $p^k$-cyclotomic polynomial is $\phi_{p^k}(x)=1+x^{p^{k-1}}+x^{2p^{k-1}}+\cdots+x^{(p-1)p^{k-1}}$. In particular, we deduce that $\zeta_p^{p-1}=-(1+\zeta_p+\cdots+\zeta_p^{p-2})$.
 If   $u\in\F_{p^n}^*,  b\in\F_{p^m}^*$, and $F$ is perfect$_1$ $c$-nonlinear, then
 \allowdisplaybreaks
 \begin{align*}
0={_c}\cC_F(u ,b)
 &=\sum_{x \in \F_{p^n}} \zeta_{p}^{\Trm(b(F(x+u)  - c  F(x )))}\\
 &=\sum_{j=0}^{p-1} |S_{j,c}^{u,b}| \zeta_{p}^{j}
  =  \sum_{j=0}^{p-2}\left(|S_{j,c}^{u,b}|-  |S_{p-1,c}^{u,b}| \right) \zeta_{p}^{j}.
 \end{align*}
 The  extension $\mathbb{Q} \stackrel{p-1}{\hookrightarrow}\mathbb{Q}(\zeta_{p})$ has degree $p-1$ and the elements in following set $\left\{\zeta_{p}^j\,|\,0\leq j\leq p-2\right\}$  are   linearly independent in $ \mathbb{Q}(\zeta_{p})$ over  $\mathbb{Q}$, therefore the coefficients in the displayed expression are zero, that is, that for all $0\leq j\leq p-2$,
$ |S_{j,c}^{u,b}|=|S_{p-1,c}^{u,b}| $. Summarizing, for any $0\leq j\leq p-1$, the cardinality of the set $S_{j,c}^{u,b}$ is independent of~$j$, and so, for all $c,b,u\neq 0$ fixed, the function $x\mapsto \Trm(b(F(x+u)-cF(x))$ is balanced.
 
If $x\mapsto \Trm(b(F(x+u)-cF(x))$ is balanced, by reversing the argument, we find that $f$ is perfect$_1$ $c$-nonlinear.
 \end{proof}
 
 As a consequence, we can easily characterize the $0$-differential bent$_1$ functions.   
\begin{cor}
\label{cor:0diffbent1}
 Let $F\in\cB_{n,p}^m$. The following statements are equivalent: 
 \begin{itemize}
 \item[$(i)$] $F$ is a  $0$-differential bent$_1$ \textup{(}perfect$_1$ $0$-nonlinear\textup{)} function;
  \item[$(ii)$]  $\cW_{F}(0,b)=0$, for all $b\neq 0$;
   \item[$(iii)$] \textup{(}Under $m=n$\textup{)} $F$ is a permutation polynomial.
  \end{itemize}
\end{cor}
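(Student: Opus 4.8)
The plan is to prove (i) $\Leftrightarrow$ (ii) for arbitrary positive integers $m,n$, and then (ii) $\Leftrightarrow$ (iii) under the extra hypothesis $m=n$, relying on Theorem~\ref{thm:bal_diff} and on the classical permutation criterion \cite[Theorem 7.7]{LN97} already quoted in the excerpt.

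For the equivalence of (i) and (ii) I would specialize Theorem~\ref{thm:bal_diff} to $c=0$: $F$ is perfect$_1$ $0$-nonlinear precisely when, for every $b\neq 0$ and every $u\neq 0$, the map $x\mapsto\Trm(b(F(x+u)-0\cdot F(x)))=\Trm(bF(x+u))$ is balanced. Since $x\mapsto x+u$ is a bijection of $\F_{p^n}$, this map is balanced if and only if $x\mapsto\Trm(bF(x))$ is balanced, and the latter condition no longer involves $u$. I would then invoke the elementary fact that for an $\F_p$-valued function $g$ on $\F_{p^n}$ one has $\sum_{x}\zeta_p^{g(x)}=0$ iff $g$ is balanced — the very linear-independence argument over $\Q(\zeta_p)$ used in the proof of Theorem~\ref{thm:bal_diff}, via $\zeta_p^{p-1}=-(1+\zeta_p+\cdots+\zeta_p^{p-2})$. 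Applying it to $g(x)=\Trm(bF(x))$ and recalling $\cW_F(0,b)=\sum_{x}\zeta_p^{\Trm(bF(x))}$ converts ``$\Trm(bF)$ balanced for all $b\neq 0$'' into ``$\cW_F(0,b)=0$ for all $b\neq 0$'', so chaining the equivalences yields (i) $\Leftrightarrow$ (ii).

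For (ii) $\Leftrightarrow$ (iii) I would set $m=n$, so $\Trm=\Trn$ and $b$ runs over $\F_{p^n}^*$, and note that $\cW_F(0,b)=\sum_{x\in\F_{p^n}}\zeta_p^{\Trn(bF(x))}=\sum_x\chi(bF(x))$ with $\chi$ the canonical additive character of $\F_{p^n}$. By \cite[Theorem 7.7]{LN97} these sums vanish for all $b\neq 0$ exactly when $F$ is a permutation of $\F_{p^n}$, i.e.\ (iii). A self-contained alternative: with $N_y=\#\{x:F(x)=y\}$ one has $\sum_y N_y\zeta_p^{\Trn(by)}=\cW_F(0,b)$, which is $0$ for $b\neq 0$ and $p^n$ for $b=0$; Fourier inversion then gives $N_y=p^{-n}\sum_b\cW_F(0,b)\zeta_p^{-\Trn(by)}=1$ for all $y$, so $F$ is bijective, and the converse is immediate.

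I do not expect a genuine obstacle: once Theorem~\ref{thm:bal_diff} is available, (i) $\Leftrightarrow$ (ii) is translation-invariance plus bookkeeping, and (ii) $\Leftrightarrow$ (iii) is a textbook permutation criterion. The only points demanding a little care are applying the ``balanced $\Leftrightarrow$ character sum zero'' equivalence to $\F_p$-valued (not $\F_{p^m}$-valued) functions, so that the cyclotomic independence argument transfers verbatim, and remembering that it is the substitution $x\mapsto x+u$ that removes the quantifier over $u$ coming from Theorem~\ref{thm:bal_diff}.
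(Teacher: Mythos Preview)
Your proposal is correct and follows essentially the same route as the paper: specialize Theorem~\ref{thm:bal_diff} to $c=0$, use the bijection $x\mapsto x+u$ to remove the dependence on $u$, and then invoke \cite[Theorem 7.7]{LN97} for the permutation criterion when $m=n$. The paper's own proof is terser and leaves the step ``$\Trm(bF)$ balanced $\Leftrightarrow$ $\cW_F(0,b)=0$'' implicit, whereas you spell it out (and add an optional Fourier-inversion alternative for (ii)$\Leftrightarrow$(iii)), but the substance is the same.
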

\begin{proof}
When $c=0$, for $u\neq 0$ fixed, the map $x\mapsto \Trm(b(F(x+u))$ is balanced if and only if  $x\mapsto \Trm(b(F(x))$ is balanced (since $x\mapsto x+u$ is a bijection on the input set $\F_{p^n}$). Under $m=n$, using~\cite[Theorem 7.7]{LN97}, this is equivalent to $F$ being a permutation polynomial.
\end{proof}
 Thus, if $m=n$ and  $F$ is a permutation of $\F_{p^n}$, then $F$ is 0-differential bent$_1$ (since in this case, $F$ is PcN for $c=0$~\cite{EFRST20}).
 We give below  another example of $c$-differential bent$_1$ functions on $\F_{p^n}$, for all $c\neq 1$.  $F(x)=x^{p^k}$, a linearized monomial on $\F_{p^n}$, we compute 
 \begin{align*}
 \Trn(b\left( {_c}D_aF(x)\right))&=\Trn\left(b(x^{p^k}+a^{p^k}-cx^{p^k} \right)\\
 &=\Trn\left((1-c)x^{p^k}\right)+\Trn(a)\\
 &=\Trn\left((1-c)^{p^{-k}} x\right)+\Trn(a),
 \end{align*}
 which is balanced, if $c\neq 1$. 
 Thus,  any linearized monomial is a (strictly) perfect$_1$ $c$-nonlinear function, for all $c\neq 1$. In fact, given any linearized polynomial $L$, for which $\Trn\left((1-c)^{p^{-k}} L(x)\right)$ is balanced, then $L$ is a (strictly) perfect$_1$ $c$-nonlinear function, for all $c\neq 1$. 
 Thus, this class of perfect$_1$ $c$-nonlinear functions is a superclass of linearized polynomials $L$ whose trace $\Trn((1-c)^{p^{-k}} L(x))$ is balanced, and, furthermore, when $c=0$, is a superclass of permutation polynomials.
 
Surely, the question is whether there are other examples. We ran a SageMath code and  found some (strictly) perfect$_1$ $c$-nonlinear ($c$-differential bent$_1$) functions on small dimensions that are not linearized polynomials.
For instance, $F(x)=x^{3}$ is    perfect$_1$ $0$-nonlinear   on $\F_{2^3}$;  $F(x)=x^{5}$ is (strictly) perfect$_1$ $0$-nonlinear on $\F_{2^3}$ and  (strictly)  perfect$_1$ $\{0,2\}$-nonlinear on $\F_{3^3}$; $F(x)=x^{21}$ is   perfect$_1$ $c$-nonlinear for all $c\neq 1$ in $\F_{3^4}$;  
 $F(x)=x^{15}$ is (strictly) perfect$_1$ $\{0,2\}$-nonlinear on $\F_{3^3}$.
 From our first two examples (and several more of that type), we see that the Gold function is not always $0$-differential bent for small values of $n$, and so, we wondered what  happens, in general. The answer is provided by~\cite{EFRST20,RS20} for the Gold function.
  However, we can show a more general result, which, as a consequence, implies also the behavior of the Gold function. We  could not adapt the  methods from~\cite{EFRST20} to show the theorem, so we provide here an alternative method that proves quite useful to show several results at once.
\begin{thm}
  Let $p$ be a prime number, $n$ a positive integer and $F(x)=x^d$,  a monomial function. If $\gcd(d,p^n-1)=1$, then $F$ is  $0$-differential bent$_1$. If  $\gcd(d,p^n-1)=2$, then $F$ is not $0$-differential bent$_1$.
\end{thm}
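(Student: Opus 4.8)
The plan is to reduce the statement to a single character sum and evaluate it. Since $F(x)=x^d$ is an $(n,n)$-function we are in the case $m=n$, so by Corollary~\ref{cor:0diffbent1} the function $F$ is $0$-differential bent$_1$ if and only if $\cW_F(0,b)=\sum_{x\in\F_{p^n}}\zeta_p^{\Trn(bx^d)}=0$ for every $b\in\F_{p^n}^*$ (equivalently, by part~$(iii)$ of that corollary, if and only if $x^d$ permutes $\F_{p^n}$, which already disposes of the case $\gcd(d,p^n-1)=1$). I would nevertheless compute $\cW_F(0,b)$ exactly, because this pins down the precise Walsh values in the non-bent case and, more importantly, the same computation treats an arbitrary value of $\gcd(d,p^n-1)$ uniformly, so it also yields the Gold-function behaviour as a corollary.

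First I would record the structure of the power map. Put $g=\gcd(d,p^n-1)$; then $x\mapsto x^d$ is a $g$-to-one map of $\F_{p^n}^*$ onto the unique index-$g$ subgroup $C_g$ (the $g$-th powers), and it fixes $0$. Hence, for $b\in\F_{p^n}^*$,
\[
\cW_F(0,b)=1+g\sum_{y\in C_g}\zeta_p^{\Trn(by)}.
\]
When $g=1$ we have $C_1=\F_{p^n}^*$ and $\sum_{y\in\F_{p^n}^*}\zeta_p^{\Trn(by)}=-1$ by orthogonality of additive characters, so $\cW_F(0,b)=0$ for all $b\neq0$, and $F$ is $0$-differential bent$_1$.

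For $g=2$ the integer $p^n-1$ is even, so $p$ is odd and $C_2$ is the set of nonzero squares of $\F_{p^n}$. Let $\eta$ be the quadratic multiplicative character of $\F_{p^n}$, extended by $\eta(0)=0$, and let $G(\eta)=\sum_{z\in\F_{p^n}^*}\eta(z)\zeta_p^{\Trn(z)}$ be the associated Gauss sum. Counting square roots gives $\sum_{w\in\F_{p^n}}\zeta_p^{\Trn(bw^2)}=\sum_{z\in\F_{p^n}}\big(1+\eta(z)\big)\zeta_p^{\Trn(bz)}=\eta(b)\,G(\eta)$, and since also $\sum_{w\in\F_{p^n}}\zeta_p^{\Trn(bw^2)}=1+2\sum_{y\in C_2}\zeta_p^{\Trn(by)}$, the displayed formula reduces to $\cW_F(0,b)=\eta(b)\,G(\eta)$. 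As $|G(\eta)|=p^{n/2}\neq0$, we get $\cW_F(0,b)\neq0$ for every $b\in\F_{p^n}^*$, so $F$ is not $0$-differential bent$_1$.

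I do not expect a genuine obstacle here: the only non-elementary ingredient is the classical evaluation $|G(\eta)|=p^{n/2}$, and the care needed is purely bookkeeping, namely getting the $g$-to-one multiplicity and the $x=0$ term right, and carrying out the quadratic-character expansion of $\sum_w\zeta_p^{\Trn(bw^2)}$. The feature worth emphasising is that the identity $\cW_F(0,b)=1+g\sum_{y\in C_g}\zeta_p^{\Trn(by)}$ is established for all $g$ at once, so this is precisely the ``alternative'' method that lets one read off several related statements (the Gold-function case among them) from a single computation.
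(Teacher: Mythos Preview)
Your proof is correct and follows essentially the same line as the paper's: both reduce the question via Corollary~\ref{cor:0diffbent1} to the single character sum $\cW_F(0,b)=\sum_{x}\zeta_p^{\Trn(bx^d)}$, dispatch $g=1$ by the permutation property, and for $g=2$ identify the sum with the quadratic Gauss sum $\eta(b)\,G(\eta)$, which is nonzero since $|G(\eta)|=p^{n/2}$. The only differences are cosmetic: the paper reaches $\sum_x\zeta_p^{\Trn(bx^2)}$ by the change of variable $x\mapsto x^{d/2}$ and then quotes Lidl--Niederreiter (Theorems~5.33 and~5.15) for the explicit value, whereas you expand the $g$-to-one structure as $1+g\sum_{y\in C_g}\zeta_p^{\Trn(by)}$ and carry out the quadratic-character computation by hand; your packaging makes the general-$g$ identity visible, but the underlying argument is the same.
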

\begin{proof}
If $\gcd(d,p^n-1)=1$, then,
\[
\sum_{x \in \F_{p^n}} \zeta_{p}^{\Trm(\alpha x^d)}=
\sum_{x \in \F_{p^n}} \zeta_{p}^{\Trm(\alpha x)}=0,\text{ if $\alpha\neq 0$},
\]
using the fact that $x\to x^d$ is a permutation if $\gcd(d,p^n-1)=1$, so if $x$ covers $\F_{p^n}$, then $x^d$ does the same,
therefore showing the first claim.

To show the second claim, by Corollary~\ref{cor:0diffbent1}, if $F$ were $0$-differential bent$_1$, then $\sum_{x \in \F_{p^n}} \zeta_{p}^{\Trm(\alpha F(x))}=0$. Assuming $\gcd(d,p^n-1)=2$, then we have the identity between the following  Gaussian sums
\[
\sum_{x \in \F_{p^n}} \zeta_{p}^{\Trm(\alpha x^d)}=
\sum_{x \in \F_{p^n}} \zeta_{p}^{\Trm(\alpha x^2)}
\]
(we use here the fact that under $\gcd(d,p^n-1)=2$, then $\{x^d\,|\,x\in\F_{p^n}\}=\{x^2\,|\,x\in\F_{p^n}\}$, which can be seen by making the change of variable $x\mapsto x^{d/\gcd(d,p^n-1)}$).
By~\cite[Theorem 5.33]{LN97}, we know that if $f(x)=a_2x^2+a_1x+a_0\in\F_{p^n}$, $a_2\neq 0$, then the Gaussian sum
\[
\sum_{x\in\F_{p^n}} \chi\left({\Trn(f(x))}\right)=\chi\left(a_0-a_1^2(4 a_2)^{-1}\right) \eta(a_2) \sum_{y\in\F_{p^n}^*} \eta(y)\chi (y),
\]
where $\eta$ is the quadratic character of $\F_{p^n}$ and $\chi$ is a nontrivial additive character of $\F_{p^n}$. In our case 
 $\chi(u)=\zeta_p^{\Trn(u)}$, and so, 
 our previous displayed sum is equal to (using further~\cite[Theorem 5.15]{LN97})
 \[
 \sum_{x \in \F_{p^n}} \zeta_{p}^{\Trm(\alpha x^2)}
 =\begin{cases} 
   \eta(\alpha) (-1)^{n-1} p^{n/2} & {\rm if}\ p\equiv 1\pmod 4\\
     \eta(\alpha) (-1)^{n-1} i^n p^{n/2} & {\rm if}\ p\equiv 3\pmod 4.
 \end{cases}
 \]
 From this last identity, we see that we cannot have $ \sum_{x \in \F_{p^n}} \zeta_{p}^{\Trm(\alpha x^2)}=0$, if $\alpha\neq 0$, and so, $F$ cannot be  $0$-differential bent$_1$.
\end{proof}
The following are some important corollaries (we use~\cite[Lemma~9]{EFRST20}: if $p=2$, then $\gcd(2^k+1,2^n-1)=1$ and, if $p>2$, then $\gcd\left(p^k+1,p^n-1\right)=2$, when  $\frac{n}{\gcd(n,k)}$ odd; also, when $n$ is even, $k$ is odd,  $\gcd(n,k)=1$, then $\gcd\left(\frac{3^k+1}{2},3^n-1 \right)=2$).  Note that   Corollary~\ref{Gold} is also a consequence of~\cite[Theorem 10 $(ii)$]{EFRST20} and~\cite{RS20}.
\begin{cor}\label{Gold}
Let $n,k$ positive integers with $\frac{n}{\gcd(n,k)}$ odd and $F(x)=x^{p^k+1}$ be defined on $\F_{p^n}$, $p$ an odd prime. Then $F$ is not $0$-differential bent$_1$. If $p=2$, then $F$ is $0$-differential bent$_1$.
\end{cor}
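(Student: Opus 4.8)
The plan is to derive Corollary~\ref{Gold} as a direct specialization of the preceding theorem, feeding it the stated gcd facts from \cite[Lemma~9]{EFRST20}. First I would observe that the theorem characterizes $0$-differential bent$_1$ monomials $F(x)=x^d$ on $\F_{p^n}$ purely in terms of $\gcd(d,p^n-1)$: the value $1$ gives bentness, the value $2$ kills it. So the whole task is to compute $\gcd(p^k+1,p^n-1)$ in the two regimes and quote the theorem.

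For the odd-prime case, I would set $d=p^k+1$ and invoke the cited lemma: when $\frac{n}{\gcd(n,k)}$ is odd and $p>2$, one has $\gcd(p^k+1,p^n-1)=2$. Applying the second half of the theorem immediately yields that $F$ is not $0$-differential bent$_1$. For the binary case $p=2$, the same lemma gives $\gcd(2^k+1,2^n-1)=1$ (for all $n,k$, and in particular when $\frac{n}{\gcd(n,k)}$ is odd), so the first half of the theorem gives that $F$ is $0$-differential bent$_1$. That is essentially the entire argument; the corollary is a two-line consequence once the gcd computations are in hand.

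The only step that requires any real content is the gcd identity itself, and that is precisely what is being assumed from \cite[Lemma~9]{EFRST20}, so I would not reprove it; if pressed, the standard fact is that $\gcd(p^k+1,p^n-1)$ equals $2$ if $\frac{n}{\gcd(n,k)}$ is even... wait, more carefully: writing $e=\gcd(n,k)$, one has $\gcd(p^k+1,p^n-1) = p^e+1$ if $n/e$ is even, and equals $\gcd(p^k+1,p^e-1)\in\{1,2\}$ (namely $2$ if $p$ odd, $1$ if $p=2$) if $n/e$ is odd; the hypothesis $\frac{n}{\gcd(n,k)}$ odd puts us in exactly the second case. I would cite this rather than re-derive it.

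Since there is no genuine obstacle here, the ``hard part'' is only bookkeeping: making sure the hypothesis $\frac{n}{\gcd(n,k)}$ odd is the right one to land in the $\gcd\in\{1,2\}$ branch, and that the theorem's dichotomy ($\gcd=1$ versus $\gcd=2$) exhausts the cases that arise. One should also note in passing that $d=p^k+1$ is coprime to $p$, so the monomial is nonconstant and the theorem genuinely applies; and for completeness one could remark that this recovers \cite[Theorem~10 $(ii)$]{EFRST20} and \cite{RS20} for the Gold function, as the paragraph before the corollary already indicates.
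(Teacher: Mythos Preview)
Your proposal is correct and matches the paper's approach exactly: the corollary is obtained by specializing the preceding theorem (the $\gcd(d,p^n-1)\in\{1,2\}$ dichotomy for $x^d$) using the gcd values supplied by \cite[Lemma~9]{EFRST20}. One minor slip: you write that $\gcd(2^k+1,2^n-1)=1$ ``for all $n,k$'', but this fails when $n/\gcd(n,k)$ is even (e.g.\ $k=1$, $n=2$); it holds precisely under the standing hypothesis $n/\gcd(n,k)$ odd, which is all you need.
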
 
The Gold function is not the only function for which we have this type of result. The Coulter-Matthews~\cite{CS97} PN function is yet another example of a function that is not $0$-differential bent$_1$ (hence not perfect$_1$ $0$-nonlinear), under some conditions, and it is $0$-differential bent$_1$, under some other conditions (see~\cite{EFRST20,RS20} for a general result on the function $x\mapsto x^{\frac{p^k+1}{2}}$ and its differential uniformity).  
\begin{cor}
Let $n=2m\geq 2$, $k$ odd,  $\gcd(n,k)=1$ \textup{(}so, $\gcd \left(\frac{3^k+1}{2}, 3^n-1 \right) = 2$\textup{)}. Then $F(x)=x^{\frac{3^k+1}{2}}$ is not $0$-differential bent$_1$. If $n,k$ are such that $\gcd\left(\frac{3^k+1}{2},3^n-1 \right)=1$, then $F(x)=x^{\frac{3^k+1}{2}}$ is $0$-differential bent$_1$.
\end{cor}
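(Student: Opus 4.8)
The plan is to invoke the previous theorem directly, so that the entire content of this corollary reduces to a \gcd computation. Recall that the previous theorem says: if $F(x)=x^d$ on $\F_{p^n}$ and $\gcd(d,p^n-1)=1$, then $F$ is $0$-differential bent$_1$, while if $\gcd(d,p^n-1)=2$, then $F$ is not $0$-differential bent$_1$. Here $p=3$, $d=\frac{3^k+1}{2}$, and $n=2m$. So the only thing to verify is which of the two \gcd cases we are in, under the stated hypotheses.

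For the first assertion, I would apply the parenthetical fact quoted just before Corollary~\ref{Gold}, attributed to~\cite[Lemma~9]{EFRST20}: when $n$ is even, $k$ is odd, and $\gcd(n,k)=1$, one has $\gcd\!\left(\frac{3^k+1}{2},3^n-1\right)=2$. The hypotheses $n=2m$, $k$ odd, $\gcd(n,k)=1$ are exactly these, so $\gcd(d,3^n-1)=2$, and the second half of the previous theorem immediately gives that $F(x)=x^{\frac{3^k+1}{2}}$ is not $0$-differential bent$_1$ (equivalently, by Corollary~\ref{cor:0diffbent1}, not perfect$_1$ $0$-nonlinear). For the second assertion, the hypothesis is simply stated as $\gcd\!\left(\frac{3^k+1}{2},3^n-1\right)=1$, so there is nothing to compute: $\gcd(d,p^n-1)=1$ and the first half of the previous theorem gives that $F$ is $0$-differential bent$_1$. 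Thus both claims follow by a one-line appeal in each direction.

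The only genuine subtlety — and hence the step I would treat most carefully — is the \gcd lemma itself, i.e. the claim $\gcd\!\left(\frac{3^k+1}{2},3^n-1\right)=2$ under $n=2m$, $k$ odd, $\gcd(n,k)=1$. Since the excerpt allows me to cite it as an established fact from~\cite[Lemma~9]{EFRST20}, I would not reprove it; but if one wanted a self-contained argument, the idea is standard: first $\gcd(3^k+1,3^n-1)=2$ follows because $\frac{n}{\gcd(n,k)}=n$ is odd (as $n$ even would make it... wait — actually here $n$ is even, so one instead uses the companion statement for $\frac{3^k+1}{2}$), and then dividing $3^k+1$ by $2$ (legitimate since $3^k+1$ is even) and tracking the remaining power of $2$ and odd part of the \gcd gives the value $2$. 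I would simply reference the lemma and move on, noting as the corollary statement already does that in the complementary regime the \gcd can drop to $1$ and the first part of the previous theorem then applies. I would also remark, for completeness, that this parallels the Gold-function case in Corollary~\ref{Gold}, and that an alternative route is available through~\cite{EFRST20,RS20} where the differential uniformity of $x\mapsto x^{\frac{p^k+1}{2}}$ is worked out directly.
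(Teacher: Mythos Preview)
Your approach is correct and coincides with the paper's: the corollary is stated immediately after the sentence citing \cite[Lemma~9]{EFRST20} for the relevant $\gcd$ values, and is meant to follow by a direct application of the preceding theorem on monomials $x^d$ with $\gcd(d,p^n-1)\in\{1,2\}$. Your brief aside attempting a self-contained $\gcd$ argument wobbles (you momentarily conflate the $n/\gcd(n,k)$ odd case for $3^k+1$ with the present even-$n$ case for $\tfrac{3^k+1}{2}$), but since you rightly resolve to cite the lemma rather than reprove it, this does not affect the proof.
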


We can generate classes of $(n,m)$-functions that are $c$-differential bent$_1$ in the following way. We take $G$ to be a PcN function on $\F_{p^n}$ with respect to $c\in\F_{p^m}$, a proper subfield of $\F_{p^n}$ (that is, $m<n$, $m\,|\,n$). We then define $F(x)=\Tr_{\F_{p^n}/\F_{p^m}}\left(G(x)\right)$.  First, observe that since $c\in\F_{p^m}$, then $\Trn \left({_c}D_aG(x)\right)= \Trm \left( {_c}D_aF(x)\right)$. 
Now, if ${_c}D_aG$ is a permutation (using our assumption), then ${_c}D_aF= \Tr_{\F_{p^n}/\F_{p^m}}\left({_c}D_aG\right)$ is balanced, and so is $b\left({_c}D_aF\right)$, for $b\neq 0$. We now use the fact that multiplication by $b\neq 0$ simply shuffles the output. What we mean is that with notations, ${\rm Ker}(\Tr_{\F_{p^n}/\F_{p^m}})=\{x\in\F_{p^n}\,|\, \Tr_{\F_{p^n}/\F_{p^m}}(x)=0\}$, and  $A_i=\{x\in\F_{p^n}\,|\, \Tr_{\F_{p^n}/\F_{p^m}}(x)=\alpha^i \}$, where $\F_{p^m}=\{0,\alpha^i\,|\,0\leq i\leq p^m-2\}$ ($\alpha$ is a primitive element of $\F_{p^m}$), then, writing $b=\alpha^{i_0}$, the partition corresponding to $b\Tr_{\F_{p^n}/\F_{p^m}}$ is now $A_0,A_{i+i_0\pmod{p^m-1}}$. Using this and the transitivity of the traces, then $ \Trm \left(b\left({_c}D_aF\right)\right)$ is also balanced.
 We record this in the next proposition.
 \begin{prop}
Let  $m\,|\,n$, $m<n$, and $p$ prime. If $G$ is PcN on $\F_{p^n}$ with respect to $c\in\F_{p^m}$, then $F(x)=\Tr_{\F_{p^n}/\F_{p^m}}(G(x))$ is $c$-differential bent$_1$.
 \end{prop}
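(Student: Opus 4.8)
The plan is to reduce the statement to the balancedness criterion of Theorem~\ref{thm:bal_diff} and then exploit transitivity of the trace together with the permutation characterization of PcN functions. By Theorem~\ref{thm:bal_diff}, it suffices to show that for every $u\in\F_{p^n}^*$ and every $b\in\F_{p^m}^*$ the map $x\mapsto \Trm\big(b({_c}D_uF(x))\big)$ is balanced, so the whole argument proceeds one $c$-derivative of $F$ at a time.

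First I would use that $c\in\F_{p^m}$ and that the relative trace $\Tr_{\F_{p^n}/\F_{p^m}}$ is $\F_{p^m}$-linear to pull the derivative through it:
$${_c}D_uF(x)=\Tr_{\F_{p^n}/\F_{p^m}}\big(G(x+u)\big)-c\,\Tr_{\F_{p^n}/\F_{p^m}}\big(G(x)\big)=\Tr_{\F_{p^n}/\F_{p^m}}\big({_c}D_uG(x)\big).$$
Combining this with the transitivity identity $\Trn=\Trm\circ\Tr_{\F_{p^n}/\F_{p^m}}$ and with the fact that the scalar $b\in\F_{p^m}$ may be moved inside the relative trace, one gets
$$\Trm\big(b\,{_c}D_uF(x)\big)=\Trm\Big(\Tr_{\F_{p^n}/\F_{p^m}}\big(b\,{_c}D_uG(x)\big)\Big)=\Trn\big(b\,{_c}D_uG(x)\big),$$
so the component functions of the $c$-derivative of $F$ are precisely absolute traces of multiples of the $c$-derivative of $G$.

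Next I would invoke the hypothesis: since $G$ is PcN on $\F_{p^n}$ with respect to $c$ and $G$ is an $(n,n)$-function, each $c$-derivative ${_c}D_uG$ with $u\neq0$ is a permutation polynomial of $\F_{p^n}$ (and for $c=1$ this also holds, since being PcN forces $\delta_{G,c}=1$). Hence, for $b\neq0$, the map $x\mapsto b\,{_c}D_uG(x)$ is a permutation of $\F_{p^n}$, and composing the balanced function $\Trn$ with a permutation keeps it balanced: $x\mapsto \Trn\big(b\,{_c}D_uG(x)\big)$ takes every value of $\F_p$ exactly $p^{n-1}$ times. By the displayed identity the same is true of $x\mapsto\Trm\big(b\,{_c}D_uF(x)\big)$, and Theorem~\ref{thm:bal_diff} then yields that $F$ is perfect$_1$ $c$-nonlinear, equivalently $c$-differential bent$_1$.

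There is no serious obstacle here beyond two housekeeping points that deserve care: verifying that the constant $b\in\F_{p^m}$ genuinely commutes with the relative trace (this is where the hypothesis $c\in\F_{p^m}$, rather than merely $c\in\F_{p^n}$, is essential; it is exactly the ``$b$ only shuffles the output'' remark made before the statement, here packaged cleanly via trace transitivity), and making sure the permutation-to-balanced step is applied to $b\cdot{_c}D_uG$ rather than to ${_c}D_uG$ alone. Everything else is a direct application of the earlier characterization, so the proof is short.
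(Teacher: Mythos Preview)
Your proof is correct and follows the same route as the paper's argument preceding the proposition: reduce to Theorem~\ref{thm:bal_diff}, commute the $c$-derivative with the relative trace using $c\in\F_{p^m}$, and invoke trace transitivity together with the permutation property of ${_c}D_uG$. Your handling of the scalar $b$ (pulling it inside the relative trace and working directly with $\Trn(b\,{_c}D_uG)$) is a slight streamlining of the paper's ``multiplication by $b$ shuffles the output'' step, but the content is identical.
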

 It is obvious that not all  $c$-differential bent$_1$ functions from $\F_{p^n}\to\F_{p^m}$ come from traces of permutations on $\F_{p^n}$ (we can see that by taking a trace function $F$ of a PcN $G$, as above, and then interchanging output points with the same trace output value). More precisely,  we take $\F_{p^m}=\{0,\alpha^i\,|\,0\leq i\leq p^m-2\}$ ($\alpha$ is a primitive element of $\F_{p^m}$) and random $A_i,A_j$, $i\neq j$, as above. We now  define $H(x)=F(x)$, unless $x\in A_1\cup A_2$, when $H(x)=\alpha^j$, if $x\in A_i$ and $H(x)=\alpha^i$, if $x\in A_i$. 

Classical (binary) bent functions do not transfer easily in this generalized bent context. To argue that claim, we next show that Maiorana-McFarland bents cannot be $c$-differential bent$_1$ for $c\neq1$. 
\begin{prop}
Let $n = 2m$. Let $F: \F_{2^n} \rightarrow \F_{2^m}$ be 
a (bent) Maiorana-McFarland $(n, m)$-function defined by
\begin{equation}
\label{vect-mmf-1}
F(x,y) = x\pi(y), \mbox{ for all } x, y \in \F_{2^m},
\end{equation}
where $\pi : \F_{2^m} \rightarrow \F_{2^m}$ is a 
permutation. Then  $F$ cannot be $c$-differential
bent$_1$  for $c\neq1$. 
\end{prop}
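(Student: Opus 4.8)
The plan is to reduce the statement to a single explicit computation via Theorem~\ref{thm:bal_diff}, which says that $F$ is $c$-differential bent$_1$ if and only if, for every $b\in\F_{2^m}^*$ and every $u\in\F_{2^n}^*$, the map $w\mapsto\Trm\big(b\,{_c}D_uF(w)\big)$ is balanced. So to prove $F$ is \emph{not} $c$-differential bent$_1$ it suffices to produce one pair $(u,b)$ for which this map is unbalanced. Writing $w=(x,y)$ with $x,y\in\F_{2^m}$ and identifying $\F_{2^n}$ with $\F_{2^m}\times\F_{2^m}$, I would take the ``horizontal'' direction $u=(a,0)$ with $a\in\F_{2^m}^*$ and an arbitrary $b\in\F_{2^m}^*$.

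For that choice, and using that the characteristic is $2$,
\[
{_c}D_{(a,0)}F(x,y)=(x+a)\pi(y)-c\,x\pi(y)=(1+c)\,x\,\pi(y)+a\,\pi(y),
\]
so $\Trm\big(b\,{_c}D_{(a,0)}F(x,y)\big)=\Trm\big(b(1+c)\pi(y)\,x\big)+\Trm\big(ab\,\pi(y)\big)$. The point is to split according to the value of $\pi(y)$. For every $y$ with $\pi(y)\neq0$, i.e.\ all but the unique point $y_0:=\pi^{-1}(0)$, the hypothesis $c\neq1$ forces $b(1+c)\pi(y)\neq0$, hence $x\mapsto\Trm\big(b(1+c)\pi(y)\,x\big)$ is $\Trm$ precomposed with a bijection of $\F_{2^m}$ and is therefore balanced on $\F_{2^m}$; adding the constant $\Trm(ab\,\pi(y))$ preserves this. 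For $y=y_0$ the expression collapses to $\Trm(0)+\Trm(0)=0$ for all $2^m$ values of $x$.

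Counting then closes the argument: the value $0\in\F_2$ is attained $2^m+(2^m-1)2^{m-1}$ times and the value $1$ is attained $(2^m-1)2^{m-1}$ times, a gap of $2^m\neq0$, so $w\mapsto\Trm\big(b\,{_c}D_{(a,0)}F(w)\big)$ is not balanced and, by Theorem~\ref{thm:bal_diff}, $F$ is not $c$-differential bent$_1$. I do not expect a genuine obstacle here; the only things to be careful about are (i) checking that Theorem~\ref{thm:bal_diff} applies in this regime ($m<n=2m$, $c\in\F_{2^m}$) and (ii) pinning down exactly where $c\neq1$ is used---it is precisely the condition that makes $1+c\neq0$, so that the ``generic'' fibres over $y\neq y_0$ are balanced rather than constant. (When $c=1$ this same direction yields a balanced map, consistent with $F$ being an ordinary bent function.) A useful remark is that the very same direction $u=(a,0)$ also handles $c=0$, so one computation disposes of every admissible value of $c$ at once.
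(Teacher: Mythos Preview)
Your proof is correct and takes a genuinely different route from the paper. The paper works directly with the definition of $c$-differential bent$_1$ via the Walsh transform: it computes the full product $\cW_F((u,v),b)\,\overline{\cW_F((u,v),bc)}$, obtains the closed form $2^{2m}(-1)^{\Trm(v(\pi_b^{-1}(u)+\pi_{bc}^{-1}(u)))}$, and observes that this expression is not independent of $(u,v)$ when $c\neq 1$. You instead invoke Theorem~\ref{thm:bal_diff} and exhibit one explicit direction $u=(a,0)$ along which the trace of the $c$-derivative fails to be balanced, the imbalance coming entirely from the single fibre $y=\pi^{-1}(0)$.

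Your argument is shorter and more elementary: a single fibre count replaces a double Walsh sum, and you use $c\neq 1$ exactly once (to ensure $1+c\neq 0$). It also handles $c=0$ uniformly, whereas the paper's closed form implicitly needs $\pi_{bc}$ invertible and so requires a separate (easy) remark when $c=0$. On the other hand, the paper's computation yields strictly more information---the exact value of the Walsh product at every $(u,v)$---which your approach does not recover. Both proofs are valid; yours is the cleaner way to establish the bare non-bentness statement.
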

\begin{proof}
 As is customary, we identify $\F_{2^n}$ with 
$\F_{2^m}^2 = \F_{2^m} \times \F_{2^m}$. 
The Walsh-Hadamard transform of $F$ at 
$((u, v), b) \in \F_{2^m}^2 \times \F_{2^m}^*$ is 
\begin{equation}
\label{wht-mmf-1}
\begin{split}
W_F((u,v), b) &= \sum_{x \in \F_{2^m}} \sum_{y \in \F_{2^m}} (-1)^{\Trm(bF(x,y)) + \Trm(ux) + \Trm(vy) }\\
&= \sum_{x \in \F_{2^m}} \sum_{y \in \F_{2^m}} (-1)^{\Trm(bx\pi(y)) + \Trm(ux) + \Trm(vy) }. 
\end{split}
\end{equation}
Then
\allowdisplaybreaks
\begin{align*}
&W_F((u,v), b) W_F((u,v), bc)\\ 
&= \sum_{\substack{x_1 \in \F_{2^m} \\ y_1 \in \F_{2^m}}} 
\sum_{\substack{x_2 \in \F_{2^m} \\ y_2 \in \F_{2^m}}}
 (-1)^{\Trm(x_1(\pi_b(y_1)  + u) ) + \Trm(x_2(\pi_{bc}(y_2) + u)) + \Trm(v(y_1 +y_2)) } \\
&= \sum_{\substack{y_1 \in \F_{2^m} \\ y_2 \in \F_{2^m}}}  (-1)^{\Trm(v(y_1 +y_2))}
\sum_{x_1 \in \F_{2^m} }
 (-1)^{\Trm(x_1(\pi_b(y_1)  + u))  }\\
&\qquad \times\sum_{x_2 \in \F_{2^m}} (-1)^{ \Trm(x_2(\pi_{bc}(y_2) + u)) } \\
&= 2^{2m} \sum_{\substack{y_1 \in \F_{2^m} \\ y_2 \in \F_{2^m}}}  (-1)^{\Trm(v(y_1 +y_2))}
\delta_0(\pi_{b}(y_1) + u)) \delta_{0}(\pi_{bc}(y_2) + u))\\
&= 2^{2m} (-1)^{\Trm(v(\pi_{b}^{-1}(u) + \pi_{bc}^{-1}(u)))},
\end{align*} 
where $\pi_b(x) = b\pi(x)$ and $\pi_{bc}(x)= bc \pi(x)$, for all $x \in\F_{2^m}$. Since the product of the Walsh coefficients is not independent of $u,v$ for $c\neq1$, our claim is shown.
\end{proof}
We now give a class of Dembowski-Ostrom   (bilinear) polynomials on $\F_{2^n}$ that are $c$-differential bent$_1$ for all $c\neq 1$ (PcN) in some subfield of $\F_{p^n}$, from the known class of (bilinear) DO polynomials of~\cite{Blo01}. The next theorem provides a new class of PcN functions.
\begin{thm}
\label{thm:bent1}
Let $k$ be a divisor of the positive integer $n$ such that $k\geq 2$, $n/k$ is odd, and $\Tr_{\F_{2^n}/\F_{2^k}}$ be the relative trace of $\F_{2^n}$ over $\F_{2^k}$ \textup{(}recall that $\Tr_{\F_{2^n}/\F_{2^k}} (x)=\sum_{i=0}^{\frac{n}{k}-1} x^{2^{ki}}$\textup{)}. Then, for any $a\in\F_{2^k}\setminus\F_2$, the polynomials
\[
F(x)=x\left(\Tr_{\F_{2^n}/\F_{2^k}}(x)+ax\right)
\]
are $c$-differential bent$_1$ \textup{(}PcN\textup{)} on $\F_{2^n}$, for all $1\neq c\in\F_{p^k}$.
\end{thm}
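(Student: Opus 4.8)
The plan is to apply Theorem~\ref{thm:bal_diff}: it suffices to show that for each $b\in\F_{2^n}^{*}$ and each $u\in\F_{2^n}^{*}$ the map $x\mapsto\Trn\!\bigl(b\,{_c}D_uF(x)\bigr)$ is balanced. Write $T=\Tr_{\F_{2^n}/\F_{2^k}}$. The crucial observation is that $F$ is a Dembowski--Ostrom (quadratic) polynomial, so that $F(x+u)=F(x)+B_u(x)+F(u)$ with polar form $B_u(x)=F(x+u)+F(x)+F(u)=xT(u)+uT(x)$, which is $\F_2$-linear in $x$ and, because $T$ is $\F_{2^k}$-linear, in fact $\F_{2^k}$-bilinear in $(x,u)$. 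Hence, using $1+c\in\F_{2^k}^{*}$ and the identity $(1+c)B_{u/(1+c)}=B_u$,
\[
{_c}D_uF(x)=F(x+u)+cF(x)=(1+c)F(x)+B_u(x)+F(u)=(1+c)\,F\!\bigl(x+\tfrac{u}{1+c}\bigr)+\kappa,
\]
with $\kappa$ independent of $x$. In other words, the $c$-derivative of $F$ is, up to a nonzero multiplicative constant and a translation of the argument, a copy of $F$ itself; this is exactly where $c$ being in the subfield $\F_{2^k}$ is used.

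It follows that $x\mapsto\Trn(b\,{_c}D_uF(x))$ is balanced if and only if $x\mapsto\Trn\bigl(b(1+c)F(x)\bigr)$ is (translate the argument, and drop the additive constant $\Trn(b\kappa)$). Since $\lambda:=b(1+c)$ ranges over all of $\F_{2^n}^{*}$ as $b$ does (with $c$ fixed), $F$ is $c$-differential bent$_1$ for every $1\neq c\in\F_{2^k}$ precisely when $\Trn(\lambda F(x))$ is balanced for all $\lambda\neq0$, which by Corollary~\ref{cor:0diffbent1} means exactly that $F$ is a permutation polynomial of $\F_{2^n}$. Thus the theorem reduces to the permutation property of the Dembowski--Ostrom polynomial $F(x)=x\bigl(T(x)+ax\bigr)$ under the hypotheses $a\in\F_{2^k}\setminus\F_2$ and $n/k$ odd, which is the input we take from~\cite{Blo01}.

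For completeness this permutation property can also be verified directly. For $z\neq0$, $F(x+z)+F(x)=0$ rewrites as $xT(z)+zT(x)=z(T(z)+az)$, and one checks there is no solution $x$ in either case: if $T(z)\neq0$ then the image of the $\F_2$-linear map $y\mapsto yT(z)+zT(y)$ equals $\ker T$ (its kernel is the $k$-dimensional space $\F_{2^k}z$ and its image lies in $\ker T$), while $T\bigl(z(T(z)+az)\bigr)=(1+a)T(z)^2\neq0$ because $a\neq1$; if $T(z)=0$ then $z\notin\F_{2^k}$ (as $n/k$ odd makes $T$ the identity on $\F_{2^k}$), the image of $y\mapsto zT(y)$ is $z\F_{2^k}$, and $az^2=z(az)\notin z\F_{2^k}$ since $az\notin\F_{2^k}$ (because $a\in\F_{2^k}^{*}$). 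The only real obstacle is spotting and executing the reduction above; once ${_c}D_uF$ is recognized as a scaled translate of $F$, the statement collapses to the (known, and in any case elementary) fact that $F$ permutes $\F_{2^n}$.
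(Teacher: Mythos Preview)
Your proof is correct, and it takes a genuinely different route from the paper's. The paper argues directly that each ${_c}D_uF$ is injective: assuming ${_c}D_uF(x)={_c}D_uF(y)$, it first applies the relative trace $T=\Tr_{\F_{2^n}/\F_{2^k}}$ to both sides (using $T(F(x))=(1+a)T(x)^2$) to force $T(x)=T(y)$, then simplifies the original equation to obtain $x+y\in\F_{2^k}$, and finally uses $T(1)=n/k$ odd to conclude $x=y$. This is a single computation that handles all $c\neq 1$ simultaneously and does not invoke the Blokhuis--Coulter--Henderson--O'Keefe permutation result.

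Your approach instead exploits the $\F_{2^k}$-bilinearity of the polar form $B_u(x)=xT(u)+uT(x)$ to show that ${_c}D_uF$ is, up to a translation of the argument and an additive constant, the map $(1+c)F$. This reduces the PcN property for every $c\in\F_{2^k}\setminus\{1\}$ to the single case $c=0$, i.e.\ to $F$ being a permutation, which is exactly the input from~\cite{Blo01}. The payoff is conceptual: your identity explains \emph{structurally} why $c$ ranging over the whole subfield costs nothing extra---the $\F_{2^k}$-homogeneity of $F$ collapses all these $c$-derivatives to shifted copies of $F$ itself. The paper's approach, by contrast, is self-contained and re-derives the permutation property along the way, but the role of the subfield is less transparent. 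Your direct verification of the permutation property (the two cases $T(z)\neq 0$ and $T(z)=0$) is also correct and is essentially the $c=0$ specialization of the paper's argument; note in particular that your key identity already gives PcN (not just $c$-differential bent$_1$) once $F$ is known to permute $\F_{2^n}$, since ${_c}D_uF$ is then a composition of bijections for every $u$, including $u=0$.
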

\begin{proof}
For easy writing, we shall use $\Tr$ for $\Tr_{\F_{2^n}/\F_{2^k}}$ in the  proof.
The case of $c=0$ is contained in~\cite{Blo01}, though, our proof will provide an argument for all $c\neq 1$ at once. To show our claim, it will be sufficient to show that, for $c\neq 1$ fixed, the $c$-differentials ${_c}D_u F$ are permutations on $\F_{p^n}$. We will use the well-known fact (and easy to show by expanding the trace and using the ``freshman identity'' $(A+B)^p=A^p+B^p$ in characteristic $p$)  that $\Tr_{\F_{p^n}/\F_{p^k}}(x^p)=\left(\Tr_{\F_{p^n}/\F_{p^k}}(x)\right)^p$.
First, since $a$ and $\Tr(x)$ are in $\F_{2^k}$, we have
\[
\Tr(F(x))=\Tr(x\Tr(x))+\Tr(ax^2)=\Tr(x)^2+a\Tr(x)^2=(1+a)\Tr(x)^2,
\]
and 
\begin{align*}
\Tr(F(x+u)+cF(x))&=(1+a)\Tr(x+u)^2+(1+a) c \Tr(x)^2\\
&=(1+a)\left((1+c)\Tr(x)^2+\Tr(u)^2\right).
\end{align*}
By absurd, we assume that for some fixed $u$, there exist $x\neq y$ in $\F_{2^n}$ such that  ${_c}D_u F(x)= {_c}D_u F(y)$. 
Thus, applying the relative trace $\Tr$ to the identity $F(x+u)+cF(x)=F(y+u)+cF(y)$, we obtain
\begin{align*}
(1+a)\left((1+c)\Tr(x)^2+\Tr(u)^2\right)=(1+a)\left((1+c)\Tr(y)^2+\Tr(u)^2\right).
\end{align*}
Since $a\neq 1$ and $c\neq 1$, we then get  $\Tr(x)=\Tr(y)$.  Going back to 
$F(x+u)+cF(x)=F(y+u)+cF(y)$,   we get 
\begin{align*}
&(x+u)(Tr(x+u)+a(x+u))+cx(Tr(x)+ax)\\
=& (y+u)(Tr(y+u)+a(y+u))+cy(Tr(y)+ay),
\end{align*}
which, by labeling $T=\Tr(x)=\Tr(y)$ and $t=\Tr(u)$, becomes
\begin{align*}
&(x+u)(T+t)+a x^2+au^2+cxT+acx^2\\
=&(y+u)(T+t)+a y^2+au^2+cyT+acy^2.
\end{align*}
Simplifying, we obtain
\[
(x+y)((1+c)T+t)=a(c+1)(x+y)^2,
\]
and since $x\neq y$, $a\neq 0$, $c\neq 1$, we infer that $x+y=\frac{(1+c)T+t}{a(c+1)}\in\F_{2^k}$. But then $0=2T=\Tr(x+y)=(x+y)Tr(1)=\frac{n}{k} (x+y)=x+y$, since $\frac{n}{k} $ is odd, implying that $x=y$, a contradiction.
\end{proof}
\begin{rem}
We can easily find \textup{(}via SageMath\textup{)}  for small dimensions~$n$, even the first class of bilinear Dembowski-Ostrom polynomials of~\textup{\cite{Blo01}} \textup{(}the Gold case was already treated earlier\textup{)}, namely, the  permutations
$F_a(x)=x^{2^k+1}+ax^{2^{n-k}+1}$, where $d=\gcd(n,k)$, $\frac{n}{d}$ is odd and $a\neq g^{t(2^d-1)}$ for all integers $t$; or $G_a(x)=x^{2^{2k}+1}+a^{2^k+1} x^{2^k+1}+ax^2$, $n=3k$, and $a\neq g^{t(2^k-1)}$ for all integers $t$, give us $c$-differential bent$_1$ functions. For example, 
$F_\alpha$ \textup{(}$\alpha$ is a primitive element in the finite field $\F_{2^n}$ under discussion\textup{)} is $\{0,\alpha^3 + \alpha^2 + \alpha, \alpha^3 + \alpha^2 + \alpha+1\}$-differential bent$_1$ \textup{(}PcN\textup{)} on $\F_{2^5}$ \textup{(}we took here the primitive polynomial $x^5 + x^2 + 1$\textup{);}
$G_\alpha$   is $\{0,\alpha^3 + \alpha^2 + \alpha, \alpha^3 + \alpha^2 + \alpha+1\}$-differential bent$_1$ \textup{(}PcN\textup{)} on $\F_{2^6}$ \textup{(}with the primitive polynomial $x^6 + x^4 + x^3 + x + 1$\textup{)}.
\end{rem}

It is not surprising that one cannot extend this theorem to the odd characteristic. Kyureghyan and \"Ozbudak~\cite{KO12} showed that if $p$ is odd, $q=p^n$, and $n\geq 5$, then $F(x)= x(\Trn(x)-ax)$ cannot be planar (that is, for all $a\neq 0$, $F(x+a)-F(x)$ is a permutation), and if $a=1,2$, then it is planar on $\F_{q^3}$ (the necessity of this last result was shown in~\cite{Blo01}); in~\cite{Ya13} it was proved that the above function is also not planar for $n\geq 4$.  

We next ask the  question whether one can characterize the differential bentness of any DO polynomial and we have such an attempt below.
Suppose that $F: \F_{p^n}\rightarrow \F_{p^m}$ (arbitrary). The $c$-autocorrelation 
of $F$ at $u \in \F_{p^n}$ and $b \in \F_{p^m}$ is 
\begin{align}
{_c}\cC_{F}(u,b)&=\sum_{x \in \F_{p^n}} \zeta_{p}^{\Trm(b(F(x+u)  - c F(x)))} \nonumber\\
&=\sum_{x \in \F_{p^n}} \zeta_{p}^{\Trm(b(F(x+u) -F(x) +F(x) - c F(x)))} \label{gen_autocor1}\\
&= \sum_{x \in \F_{p^n}} \zeta_{p}^{\Trm(b(F(x+u) -F(x)))}
\zeta_{p}^{\Trm(b(1-c)(F(x)))}. \nonumber
\end{align}

Using the above observation, we can characterize some cases when Dembowski-Ostrom (DO) polynomials are (or are not) $c$-differential bent$_1$ (we do not see an easy way to modify our method~\cite{EFRST20} to show such a result, so we use a different technique).

For an $(n,n)$-function $F\in\cB_{n,p}^n$, we let $\Omega_{F,i}= \{x\,|\, \Trn(F(x))=i\}$ be the $i$-support of $F$, $0\leq i\leq p-1$. For a Dembowski-Ostrom polynomial $\displaystyle F(x)=\sum_{i,j=0}^{n-1} a_{ij} x^{p^i+p^j}$, we let $L_u(x)=A_{n-1}x+ A_{n-2}^p x^p+\cdots+  A_1^{p^{n-2}} x^{p^{n-2} }+A_0^{p^{n-1}}x^{p^{n-1}}$ be the {\em linearized companion polynomial} at $u\in\F_{p^n}^*$, where $A_i=\sum_{k=0}^{n-1}  u^{p^k}(a_{ik}+a_{ki})$.

\begin{thm}
Let $n\geq 2$, $c\in\F_{p^n}$ fixed, and $\displaystyle F(x)=\sum_{i,j=0}^{n-1} a_{ij} x^{p^i+p^j}$ be a Dembowski-Ostrom polynomial on $\F_{p^n}$, $p$ prime. The following statements hold\textup{:}
\begin{itemize}
\item[$(i)$] If, for some $u\in\F_{p^n}^*$, there exists $b\in\F_{p^n}^*$ such that $L_u(b)=0$, where $A_i=\sum_{k=0}^{n-1}  u^{p^k}(a_{ik}+a_{ki})$, and $\sum_{i=1}^{p-1} \cardinality\Omega_{b(1-c)F,i}< p^{n-1}$, then $F$ is not $c$-differential bent$_1$.
\item[$(ii)$] If for $u,b\in\F_{2^n}^*$, when either $L_u(b)\neq 0$ and $\displaystyle \sum_{i=1}^{p-1}\zeta_p^i\sum_{x \in \Omega_{b(1-c)F,i}}  \zeta_p^{\Trm(bD_uF(x))}=0$, or,  $L_u(b)=0$ and   $\displaystyle \sum_{i=1}^{p-1}\zeta_p^i\sum_{x \in \Omega_{b(1-c)F,i}}  \zeta_p^{\Trm(bD_uF(x))}=(-1)^{bF(u)}p^{n}$,  then $F$ is $c$-differential bent$_1$.
\end{itemize}
\end{thm}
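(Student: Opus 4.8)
The plan is to work directly with the decomposition~\eqref{gen_autocor1} of the $c$-autocorrelation and to use the one structural fact available about a Dembowski--Ostrom polynomial: since $F(0)=0$ and each monomial $x^{p^i+p^j}$ contributes only a bilinear cross term to a difference, the ordinary derivative is \emph{affine},
\[
D_uF(x)=F(x+u)-F(x)=\mathcal{L}_u(x)+F(u),\qquad \mathcal{L}_u(x)=\sum_{i=0}^{n-1}A_ix^{p^i},
\]
where $\mathcal{L}_u$ is $\F_p$-linear and $A_i=\sum_{k=0}^{n-1}u^{p^k}(a_{ik}+a_{ki})$. Hence $x\mapsto\Trn(bD_uF(x))$ is an affine $\F_p$-valued map; moving the Frobenius through the absolute trace and passing to the trace-adjoint of $\mathcal{L}_u$ shows that its linear part vanishes identically \emph{precisely} when $L_u(b)=0$ --- this is exactly the condition the linearized companion polynomial $L_u$ is built to detect. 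Splitting the sum in~\eqref{gen_autocor1} according to the value $i=\Trn(b(1-c)F(x))\in\{0,\dots,p-1\}$ gives
\[
{_c}\cC_F(u,b)=\sum_{i=0}^{p-1}\zeta_p^{\,i}\sum_{x\in\Omega_{b(1-c)F,i}}\zeta_p^{\,\Trn(bD_uF(x))},
\]
and, since $m=n$, by the theorem equating the two notions (equivalently, by Theorem~\ref{thm:bal_diff}, asking that each $x\mapsto\Trn(b\,{_c}D_uF(x))$ be balanced) it suffices to decide when this quantity is $0$ for all $u\in\F_{p^n}^*$, $b\in\F_{p^n}^*$.

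For part $(ii)$ (stated, as the notation $\F_{2^n}$ signals, in characteristic $2$), fix $u,b\neq0$. If $L_u(b)\neq0$, then $x\mapsto\Trn(bD_uF(x))$ is a non-constant affine form, hence balanced, so $\sum_{i}\sum_{x\in\Omega_{b(1-c)F,i}}\zeta_p^{\Trn(bD_uF(x))}=0$; subtracting this balanced contribution from ${_c}\cC_F(u,b)$ leaves (a scalar multiple of) the residual sum $\sum_{i=1}^{p-1}\zeta_p^i\sum_{x\in\Omega_{b(1-c)F,i}}\zeta_p^{\Trn(bD_uF(x))}$, which the first hypothesis sets to $0$. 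If instead $L_u(b)=0$, then $\Trn(bD_uF(x))=\Trn(bF(u))$ for \emph{every} $x$, whence
\[
{_c}\cC_F(u,b)=\zeta_p^{\,\Trn(bF(u))}\sum_{i=0}^{p-1}\zeta_p^{\,i}\,\cardinality{\Omega_{b(1-c)F,i}},
\]
and the second hypothesis is exactly what is needed to make this right-hand side vanish. Since this holds for all admissible $u,b$, $F$ is $c$-differential bent$_1$.

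For part $(i)$, suppose $L_u(b)=0$ for some $u,b\in\F_{p^n}^*$. Then the displayed identity of the previous paragraph applies; if $F$ were $c$-differential bent$_1$ we would need $\sum_{i=0}^{p-1}\zeta_p^i\cardinality{\Omega_{b(1-c)F,i}}=0$ (dividing by the nonzero root of unity $\zeta_p^{\Trn(bF(u))}$). Writing $\zeta_p^{p-1}=-(1+\zeta_p+\dots+\zeta_p^{p-2})$ and invoking the $\Q$-linear independence of $1,\zeta_p,\dots,\zeta_p^{p-2}$ together with $\sum_{i=0}^{p-1}\cardinality{\Omega_{b(1-c)F,i}}=p^n$, this forces $\cardinality{\Omega_{b(1-c)F,i}}=p^{n-1}$ for all $i$, hence $\sum_{i=1}^{p-1}\cardinality{\Omega_{b(1-c)F,i}}=(p-1)p^{n-1}\geq p^{n-1}$, contradicting the hypothesis $\sum_{i=1}^{p-1}\cardinality{\Omega_{b(1-c)F,i}}<p^{n-1}$. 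Therefore $F$ is not $c$-differential bent$_1$; this is the same mechanism by which the Gold function fails to be $0$-differential bent$_1$.

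The main work is bookkeeping rather than a single hard step. One must (a) check that $L_u$ genuinely records the vanishing of the linear part of $x\mapsto\Trn(bD_uF(x))$, which rests on Frobenius-invariance of $\Trn$ and the trace-adjoint computation for $\mathcal{L}_u$; and (b) verify that, after peeling off the balanced contribution (when $L_u(b)\neq0$) or the constant contribution $\zeta_p^{\Trn(bF(u))}$ (when $L_u(b)=0$), what remains is \emph{exactly} the expression appearing in the hypotheses, so that the equivalences are tight. The $L_u(b)=0$ branch is the delicate one: there one must correctly pin down the constant value $\Trn(bF(u))$ and, for the ``only if'' direction of $(i)$, convert $\sum_i\zeta_p^i\cardinality{\Omega_{b(1-c)F,i}}\neq0$ into a cardinality count via cyclotomic linear independence. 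The $L_u(b)\neq0$ branch is comparatively routine, since balancedness of a non-trivial affine form already does most of the work --- which is also why the clean form of $(ii)$ is essentially a characteristic-$2$ phenomenon.
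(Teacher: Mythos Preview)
Your proposal is correct and follows the paper's strategy: both start from the decomposition~\eqref{gen_autocor1}, split the $c$-autocorrelation according to the value of $\Trn(b(1-c)F(x))$, and exploit that for a DO polynomial the ordinary derivative $D_uF$ is affine, so that the full character sum $\sum_x\zeta_p^{\Trn(bD_uF(x))}$ is either $0$ or $\zeta_p^{\Trn(bF(u))}p^n$ according to whether $L_u(b)\neq0$ or $L_u(b)=0$ (the paper quotes~\cite[Theorem~5.34]{LN97}; you argue this directly via the trace-adjoint of the linearized part). Your treatment of part~$(ii)$, including the observation that the clean formulation is a $p=2$ phenomenon, matches the paper's brief ``similar approach'' remark.

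The one substantive methodological difference is in part~$(i)$. The paper reaches the conclusion by a magnitude estimate: writing ${_c}\cC_F(u,b)=\zeta_p^{\Trn(bF(u))}p^n-\sum_{i=1}^{p-1}(1-\zeta_p^i)S_i$ and using $\sum_{i=1}^{p-1}(1-\zeta_p^i)=p$ together with the triangle inequality to get $|{_c}\cC_F(u,b)|\ge p^n-p\cdot\sum_{i\ge1}\cardinality{\Omega_{b(1-c)F,i}}>0$. You instead argue by cyclotomic $\Q$-linear independence: if ${_c}\cC_F(u,b)=0$ then $\sum_i\zeta_p^i\cardinality{\Omega_{b(1-c)F,i}}=0$, which forces all $\cardinality{\Omega_{b(1-c)F,i}}=p^{n-1}$, hence $\sum_{i\ge1}\cardinality{\Omega_{b(1-c)F,i}}=(p-1)p^{n-1}\ge p^{n-1}$. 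Your argument is slightly sharper (it shows any deviation from equidistribution already obstructs bentness, not just the $<p^{n-1}$ regime), while the paper's estimate is more quantitative in flavor; both prove the stated claim.
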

\begin{proof}
From~\eqref{gen_autocor1}, we infer
\allowdisplaybreaks
\begin{align*}
{_c}\cC_{F}(u,b)&=\sum_{x \in \F_{p^n}} \zeta_{p}^{\Trn(b(F(x+u) -F(x)))}
\zeta_{p}^{\Trn(b(1-c)(F(x)))} \\
&=\sum_{i=0}^{p-1} \zeta_p^i \sum_{x \in \Omega_{b(1-c)F,i}} \zeta_{p}^{\Trn(b(F(x+u) -F(x)))}\\
&= \sum_{x \in \F_{p^n}} \zeta_{p}^{\Trn(bD_uF(x))} - \sum_{i=1}^{p-1}(1-\zeta_p^i) 
\sum_{x \in \Omega_{b(1-c)F,i}}  \zeta_{p}^{\Trn(bD_uF(x))}.
\end{align*}
Surely, for $u$ fixed, 
\begin{align*}
D_uF(x)&=\sum_{i,j=0}^{n-1} a_{ij} (x+u)^{p^i+p^j}+\sum_{i,j=0}^{n-1} a_{ij} x^{p^i+p^j}\\
&=\sum_{i,j=0}^{n-1} a_{ij} \left( u^{p^i} x^{p^j}+u^{p^j} x^{p^i}+u^{p^i+p^j}\right)\\
&=\sum_{i=0}^{n-1} \left(\sum_{k=0}^{n-1}  u^{p^k}(a_{ik}+a_{ki}) \right) x^{p^i}+\sum_{i,j=0}^{n-1}u^{p^i+p^j}.
\end{align*}
We let $A_i=\sum_{k=0}^{n-1}  u^{p^k}(a_{ik}+a_{ki})$  and $A=\sum_{i,j=0}^{n-1}u^{p^i+p^j}$.
We now use~\cite[Theorem 5.34]{LN97},
which states that for a polynomial $f(x)=A_r x^{p^r}+A_{r-1} x^{p^{r-1}}+\cdots+A_1 x^p+A_0x+A$, then
\[
\sum_{x\in\F_{p^n}} \chi_b(f(x))=
\begin{cases}
\chi_b(A) p^n &{\rm if}\ bA_r+b^p A_{r-1}^p+\cdots+ b^{p^{r-1} } A_1^{p^{r-1}}+b^{p^r}A_0^{p^r}=0\\
0 &{\rm otherwise},
\end{cases}
\]
where $\chi$ is a nontrivial additive character of $\F_{p^n}$ and $\chi_b(y)=\chi(by)$.
In our case,  $\chi(y)=\zeta_{p}^{\Trn(y)}$ and so,
\begin{align*}
\sum_{x \in \F_{p^n}} \zeta_{p}^{\Trn(bD_uF(x))}
=& \sum_{x \in \F_{p^n}} \zeta_{p}^{ \Trn\left(b\sum_{i=0}^{n-1} A_i x^{p^i}\right)}\\
=&
\begin{cases}
\zeta_{p}^{bF(u)} p^n\ &{\rm if}\ L_u(b)=0\\
0\ &{\rm otherwise}.
\end{cases}
\end{align*}
If the $i$-support of $\Trn(b(1-c)F(x))$ satisfies $\sum_{i=1}^{p-1} \cardinality\Omega_{b(1-c)F,i}< p^{n-1}$, we therefore find that for $b$  satisfying $bA_{n-1}+b^p A_{n-2}^p+\cdots+ b^{p^{n-2} } A_1^{p^{n-2}}+b^{p^{n-1}}A_0^{p^{n-1}}=0$, then
\begin{align*}
&\left|\sum_{x \in \F_{p^n}} \zeta_{p}^{\Trn(bD_uF(x))}-  \sum_{i=1}^{p-1}(1-\zeta_p^i)  \sum_{x \in \Omega_{b(1-c)F,i}} \zeta_{p}^{\Trn(bD_uF(x))}\right|\\
& \geq \left|\zeta_{p}^{bF(u)}\right|p^n-\left| \sum_{i=1}^{p-1}(1-\zeta_p^i) \right|\sum_{i=1}^{n-1}\cardinality \Omega_{b(1-c)F,i}>0,
\end{align*}
where we used the fact that  $p=\sum_{i=1}^{p-1}(1-\zeta_p^i) $,
and the first claim is shown. 

The second claim follows a similar approach since the autocorrelation now is zero, under the imposed conditions and the theorem is shown.
\end{proof}
\begin{rem}
We can impose different conditions on the DO polynomial~$F$ such that $F$ becomes $c$-differential bent$_1$, but they all become too technical and we prefer to just give the idea above.
\end{rem}
When $p=2$, the theorem above takes a slightly simpler form.
\begin{cor}
Let $n\geq 2$, $c\in\F_{2^n}$ fixed, and $\displaystyle F(x)=\sum_{i,j=0}^{n-1} a_{ij} x^{2^i+2^j}$ be a Dembowski-Ostrom polynomial on $\F_{2^n}$. The following statements hold\textup{:}
\begin{itemize}
\item[$(i)$] If, for some $u\in\F_{2^n}^*$, there exists $b\in\F_{2^n}^*$ such that $L_u(b)=0$, where $A_i=\sum_{k=0}^{n-1}  u^{p^k}(a_{ik}+a_{ki})$, and $|\Omega_{b(1-c)F,1}|<2^{n-1}$, then $F$ is not $c$-differential bent$_1$.
\item[$(ii)$] If for $u,b\in\F_{2^n}^*$, $L_u(b)\neq 0$ and $\displaystyle \sum_{x \in \Omega_{b(1-c)F,1}}  (-1)^{\Trm(bD_uF(x))}=0$, or, if $L_u(b)=0$ and   $\displaystyle \sum_{x \in \Omega_{b(1-c)F,1}}  (-1)^{\Trm(bD_uF(x))}=(-1)^{bF(u)} 2^{n-1}$,  then $F$ is $c$-differential bent$_1$.
\end{itemize}
\end{cor}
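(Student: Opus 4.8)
The plan is to read off the corollary as the $p=2$ specialization of the preceding theorem and its proof, where characteristic-two arithmetic collapses every $(p-1)$-fold sum to a single term. First I would specialize \eqref{gen_autocor1} with $p=2$ and $m=n$ (so $\Trm=\Trn$): since $\zeta_2=-1$, $1-\zeta_2=2$, and the only nonzero residue modulo $2$ is $i=1$, one obtains
\[
{_c}\cC_{F}(u,b)=\sum_{x \in \F_{2^n}} (-1)^{\Trn(bD_uF(x))}-2\!\!\sum_{x \in \Omega_{b(1-c)F,1}}\!\! (-1)^{\Trn(bD_uF(x))},
\]
where $D_uF(x)=F(x+u)+F(x)$ is the ordinary derivative. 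Expanding it as $D_uF(x)=\sum_{i=0}^{n-1}A_i x^{2^i}+F(u)$ with $A_i=\sum_{k=0}^{n-1}u^{2^k}(a_{ik}+a_{ki})$ and invoking \cite[Theorem 5.34]{LN97}, the full character sum $\sum_{x\in\F_{2^n}}(-1)^{\Trn(bD_uF(x))}$ equals $(-1)^{bF(u)}2^{n}$ when $L_u(b)=0$ and vanishes otherwise. Hence ${_c}\cC_{F}(u,b)$ equals $(-1)^{bF(u)}2^{n}-2\sum_{x \in \Omega_{b(1-c)F,1}}(-1)^{\Trn(bD_uF(x))}$ if $L_u(b)=0$, and equals $-2\sum_{x \in \Omega_{b(1-c)F,1}}(-1)^{\Trn(bD_uF(x))}$ if $L_u(b)\neq0$.

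For part $(i)$, I would fix $u,b\in\F_{2^n}^*$ with $L_u(b)=0$ and bound $\bigl|\sum_{x \in \Omega_{b(1-c)F,1}}(-1)^{\Trn(bD_uF(x))}\bigr|\le|\Omega_{b(1-c)F,1}|<2^{n-1}$, so that $|{_c}\cC_{F}(u,b)|\ge 2^{n}-2|\Omega_{b(1-c)F,1}|>0$; thus ${_c}\cC_{F}(u,b)\neq0$ at a nonzero argument, and by Theorem~\ref{thm:bal_diff} (equivalently the equivalence between perfect$_1$ $c$-nonlinearity and $c$-differential bent$_1$-ness established above) $F$ is not $c$-differential bent$_1$. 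For part $(ii)$, I would observe that the two displayed hypotheses force ${_c}\cC_{F}(u,b)=0$ in each branch: if $L_u(b)\neq0$ the autocorrelation is $-2\cdot0=0$, and if $L_u(b)=0$ it is $(-1)^{bF(u)}\bigl(2^{n}-2\cdot2^{n-1}\bigr)=0$. Since by hypothesis one of the two branches holds for every $u,b\in\F_{2^n}^*$, all nontrivial $c$-autocorrelations vanish, so $F$ is perfect$_1$ $c$-nonlinear, equivalently $c$-differential bent$_1$, again by Theorem~\ref{thm:bal_diff}.

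Every step here is a routine specialization, so I do not anticipate a genuine obstacle; the only points needing care are the characteristic-two bookkeeping---that $-c=c$ and $1-c=1+c$ (so $\Omega_{b(1-c)F,1}$ is the level set $\{x:\Trn(b(1+c)F(x))=1\}$), that $\sum_{i=1}^{p-1}(1-\zeta_p^i)=2$, and that $m=n$---together with making sure the $\F_2$-version of \cite[Theorem 5.34]{LN97} is matched to the linearized companion polynomial $L_u$ with the correct Frobenius twists, i.e. to the vanishing condition $bA_{n-1}+b^{2}A_{n-2}^{2}+\cdots+b^{2^{n-1}}A_0^{2^{n-1}}=0$.
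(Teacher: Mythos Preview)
Your proposal is correct and follows exactly the route the paper intends: the corollary is stated without proof as the $p=2$ specialization of the preceding theorem, and your argument simply reruns that theorem's proof with $\zeta_2=-1$, collapsing the sum $\sum_{i=1}^{p-1}(1-\zeta_p^i)(\cdot)$ to the single term $2\sum_{x\in\Omega_{b(1-c)F,1}}(-1)^{\Trn(bD_uF(x))}$ and then reading off both parts from the resulting dichotomy for $\sum_{x}(-1)^{\Trn(bD_uF(x))}$ via \cite[Theorem~5.34]{LN97}. There is nothing to add.
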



 \section{A second crosscorrelation: $c$-differential bent$_2$ and perfect$_2$ $c$-nonlinearity}
  \label{sec3}
  
 In this section, we take a novel  route and define a (semi-vectorial) Walsh transform (and a crosscorrelation below) by identifying {\em only} the output domain (via some basis, generated by the primitive element $\alpha$)  with $\Z_{p^m}$, using the invertible map $\sigma:\F_{p^m}\to \Z_{p^m}$, $\sigma(a_0+a_1\alpha+\cdots+a_{m-1}\alpha^{m-1})=a_0+a_1p+\cdots+a_{m-1}p^{m-1}$ (the invertibility comes from the unique representation of an integer in base $p$). We now define the (semi-vectorial) {\em Walsh transform} by (we avoid writing $\sigma(F)$ and just use $F$ below, with the understanding that the exponent of $\zeta_{p^m}^{F(x)}$ has the meaning that we regard it in~$\Z_{p^m}$)
 \[
  \cW_F(a)=\sum_{x\in\F_{p^n}} \zeta_{p^m}^{F(x)} \zeta_p^{-\Trn(ax)}.
\]

As for the regular differentials, for $f\in\cB_{n,p}^m$ and fixed $c\in\V_m $, we define the {\em $c$-crosscorrelation} at $z\in\F_{p^n}$ by
\[
{_c}\cC_{F,G}(z)=\sum_{x \in \F_{p^n}} \zeta_{p^m}^{F(x+z)  - c   G(x )}
\]
and the corresponding {\em $c$-autocorrelation} at $z\in\F_{p^n}$, ${_c}\cC_{F}={_c}\cC_{F,F}$. Surely, if $m=1$, ${_c}\cC_{F,G}=\caC_{F,cG}$ and ${_c}\cC_{F}=\caC_{F,cF}$.  The proof of the following lemma is similar to the one of Lemma~\ref{lem1}, so we omit it.
 
\begin{lem}
\label{lem2}
Let $p$ be a prime number and $m,n$ be nonzero positive integers. If $F,G\in\cB_{n,p}^m$ and $c\in\F_{p^m}$, then 
\begin{equation}
\begin{split}
\label{eq3}
& \sum_{u  \in \F_{p^n}}{_c}\cC_{F,G}(u )\zeta_p^{-\Tr(ux)} =  \cW_F(x )\overline{\cW_{cG}(x )}, \\
& {_c}\cC_{F,G}(u ) = p^{-n}\sum_{x  \in \F_{p^n}}\cW_F(x )\overline{\cW_{cG}(x )}\zeta^{\Tr(ux)}.
\end{split}
\end{equation}
In particular, if $F= G$, then
\begin{align*}
&\sum_{u  \in \F_{p^n}}{_c}\cC_{F}(u )\zeta_p^{-\Tr(ux)} =  \cW_F(x )\overline{\cW_{cF}(x )}\\
& {_c}\cC_F(u ) = p^{-n}\sum_{x  \in \F_{p^n}}\cW_F(x ) \overline{\cW_{cF}(x )}\zeta_p^{\Tr(ux)}.
\end{align*}
\end{lem}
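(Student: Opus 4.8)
The plan is to imitate, almost verbatim, the two-step computation used for Lemma~\ref{lem1}: the only differences are that the component twist $\zeta_p^{\Trm(bF(x))}$ is replaced by the semi-vectorial exponential $\zeta_{p^m}^{F(x)}$ (the exponent being read in $\Z_{p^m}$ via the bijection $\sigma$), and that there is now no parameter $b$ to carry along. Since the exponents live in $\Z_{p^m}$, the basic factorization $\zeta_{p^m}^{F(x+z)-cG(x)}=\zeta_{p^m}^{F(x+z)}\,\zeta_{p^m}^{-cG(x)}$ is legitimate, and $\overline{\cW_{cG}(x)}=\sum_{y\in\F_{p^n}}\zeta_{p^m}^{-cG(y)}\zeta_p^{\Tr(xy)}$, where $\cW_{cG}$ denotes the (semi-vectorial) Walsh transform of the map $x\mapsto cG(x)$, not $c\cdot\cW_G(x)$.

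For the first identity I would start from $\sum_{u\in\F_{p^n}}{_c}\cC_{F,G}(u)\,\zeta_p^{-\Tr(ux)}$, insert the definition of the $c$-crosscorrelation, and split $\zeta_p^{-\Tr(ux)}=\zeta_p^{-\Tr((z+u)x)}\zeta_p^{\Tr(zx)}$ using linearity of $\Tr$. This lets the inner sum over $u$ turn, via the bijective change of variable $w:=z+u$ (valid for each fixed $z$), into a sum over $w$ that no longer involves $z$; reading off $\sum_{w}\zeta_{p^m}^{F(w)}\zeta_p^{-\Tr(wx)}=\cW_F(x)$ and $\sum_{z}\zeta_{p^m}^{-cG(z)}\zeta_p^{\Tr(zx)}=\overline{\cW_{cG}(x)}$ gives the first displayed line. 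For the second identity I would reverse the argument: expand $p^{-n}\sum_{x}\cW_F(x)\overline{\cW_{cG}(x)}\,\zeta_p^{\Tr(ux)}$ as a triple sum over $x,z,w$, interchange the order of summation, and use orthogonality of additive characters, $\sum_{x}\zeta_p^{\Tr((u+z-w)x)}=p^n$ if $w=u+z$ and $0$ otherwise; the factor $p^{-n}$ cancels, the delta forces $w=u+z$, and what survives is precisely $\sum_{z}\zeta_{p^m}^{F(u+z)-cG(z)}={_c}\cC_{F,G}(u)$. Specializing $F=G$ in both identities yields the stated particular case immediately.

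I do not anticipate a genuine obstacle: this is a routine reordering of a finite exponential sum, exactly parallel to Lemma~\ref{lem1}. The only points deserving a moment of care are bookkeeping ones --- that arithmetic in the exponent of $\zeta_{p^m}$ is carried out in $\Z_{p^m}$ through $\sigma$, so that $\overline{\cW_{cG}(x)}$ is the conjugate of the Walsh transform of the twisted function $cG$ (and not $c$ times $\cW_G(x)$), and that the decoupling of the $u$-sum from the $z$-sum takes place only after the translation $w=z+u$.
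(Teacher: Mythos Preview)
Your proposal is correct and matches the paper's approach exactly: the paper omits the proof of Lemma~\ref{lem2}, noting only that it is similar to that of Lemma~\ref{lem1}, and your plan is precisely to reproduce that computation with $\zeta_{p^m}^{F(\cdot)}$ in place of $\zeta_p^{\Trm(bF(\cdot))}$ and no $b$-parameter. Your caveat that $\overline{\cW_{cG}(x)}$ is the conjugate of the Walsh transform of the function $x\mapsto cG(x)$ (rather than $c\,\cW_G(x)$) is well placed and is the only point where one could stumble.
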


As before,  we define a perfect nonlinear and bent property that takes into account this type of autocorrelation and differential. 
\begin{defn}
For $m\leq n$,  we say that a function $F\in\cB_{n,p}^m$  is {\em $c$-differential bent$_2$} if $\displaystyle \cW_F(x)\overline{\cW_{cF}(x)}={_c}\cC_F(0)\ \forall x\in\F_{2^n}$.
\end{defn}
\begin{defn}
We say that $F$ is {\em perfect$_2$ $c$-nonlinear}  if its $c$-autocorrelation 
${_c}\cC_F(u)=0$, $u\in \F_{p^n}^*$. If, in addition,  ${_c}\cC_F(0)=0$, then   $F$ is {\em strictly perfect$_2$ $c$-nonlinear}.
\end{defn}

Below, we will show that a function $F\in\cB_{n,p}^m$ is  $c$-differential bent$_2$ if and only if $F$ is perfect$_2$ $c$-nonlinear  ($m\leq n$), thereby extending Nyberg's result~\cite{Nyb91}, in this context, as well. If $F$ is a PcN $(n,n)$-function (as we defined it in~\cite{EFRST20}), then $F$ is strictly perfect$_2$ $c$-nonlinear, since $F(x+u)-cF(x)$ is a permutation and ${_c}\cC_F(u)=0$. However, the reciprocal may not be true, in general, since a sum of powers of roots of unity being zero does not imply our uniform distribution of the exponents.

\begin{thm}
Let $1\leq m\leq n$ be  integers, $p$ prime, and $F\in\cB_{n,p}^m$, $1\neq c\in\F_{p^m}$. Then $F$ is perfect$_2$ $c$-nonlinear if and only if $F$ is  $c$-differential bent$_2$. In particular, $F$ is strictly perfect$_2$ $c$-nonlinear if and only if $\displaystyle \cW_F(x)\overline{\cW_{cF}(x)}=0$.
\end{thm}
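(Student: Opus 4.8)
The statement to prove is the $c$-differential bent$_2$ analogue of the earlier Theorem for bent$_1$: namely that $F$ is perfect$_2$ $c$-nonlinear iff $F$ is $c$-differential bent$_2$, and the strict version iff $\cW_F(x)\overline{\cW_{cF}(x)}=0$ for all $x$. The plan is to mimic the proof of the corresponding bent$_1$ theorem almost verbatim, using Lemma~\ref{lem2} in place of Lemma~\ref{lem1}; the only structural change is that the second input variable $b$ has disappeared (we are now in the semi-vectorial $\Z_{p^m}$-valued setting), so all the sums are single-parameter in $u$.

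First I would assume $F$ is perfect$_2$ $c$-nonlinear, so ${_c}\cC_F(u)=0$ for every $u\in\F_{p^n}^*$. Applying the first identity of Lemma~\ref{lem2} (with $F=G$),
\[
\cW_F(x)\overline{\cW_{cF}(x)}=\sum_{u\in\F_{p^n}}{_c}\cC_F(u)\zeta_p^{-\Trn(ux)}={_c}\cC_F(0)+\sum_{0\neq u\in\F_{p^n}}{_c}\cC_F(u)\zeta_p^{-\Trn(ux)}={_c}\cC_F(0),
\]
since every term with $u\neq 0$ vanishes by hypothesis; this is exactly the $c$-differential bent$_2$ condition. Conversely, assume $\cW_F(x)\overline{\cW_{cF}(x)}={_c}\cC_F(0)$ for all $x$. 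Then by the second (inversion) identity of Lemma~\ref{lem2}, for any $u\in\F_{p^n}^*$,
\[
{_c}\cC_F(u)=p^{-n}\sum_{x\in\F_{p^n}}\cW_F(x)\overline{\cW_{cF}(x)}\zeta_p^{\Trn(ux)}=p^{-n}\,{_c}\cC_F(0)\sum_{x\in\F_{p^n}}\zeta_p^{\Trn(ux)}=0,
\]
because $x\mapsto\Trn(ux)$ is balanced for $u\neq0$, so the character sum is zero. This gives perfect$_2$ $c$-nonlinearity and proves the equivalence. For the strict version, note that $F$ strictly perfect$_2$ means ${_c}\cC_F(0)=0$ in addition, so the right-hand side ${_c}\cC_F(0)$ of the bent$_2$ identity is $0$; conversely if $\cW_F(x)\overline{\cW_{cF}(x)}=0$ identically, then in particular $F$ is $c$-differential bent$_2$ (hence perfect$_2$) and ${_c}\cC_F(0)=\cW_F(0)\overline{\cW_{cF}(0)}=0$, i.e.\ strictly perfect$_2$.

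I do not anticipate a serious obstacle here: the argument is purely formal once Lemma~\ref{lem2} is in hand, exactly parallel to the bent$_1$ case. The one subtlety worth flagging is that in the bent$_2$ world the function is $\Z_{p^m}$-valued (via $\sigma$), so ``perfect$_2$ $c$-nonlinear'' is genuinely \emph{weaker} than ${_c}D_uF$ being balanced as an $\F_{p^m}$-valued map — as the remark before the theorem points out, a vanishing sum of $\zeta_{p^m}$-powers need not force equidistribution of exponents. This means there is no bent$_2$ analogue of Theorem~\ref{thm:bal_diff}; the equivalence proved here is strictly between the autocorrelation-zero condition and the Walsh-product condition, and should not be over-interpreted as a balancedness statement. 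The proof itself, however, needs nothing beyond the two displayed manipulations and the standard fact that nontrivial additive character sums over $\F_{p^n}$ vanish.
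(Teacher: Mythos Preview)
Your proof is correct and follows essentially the same route as the paper's own argument: both directions are obtained by plugging the hypothesis into the two identities of Lemma~\ref{lem2} and using that $\sum_{x}\zeta_p^{\Trn(ux)}=0$ for $u\neq 0$. Your treatment of the strict case is spelled out a bit more explicitly than in the paper, but the content is the same.
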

\begin{proof}
We first assume that $F$ is perfect$_2$ $c$-nonlinear, and so, ${_c}\cC_F(u)=0$, for all $u\in \F_{p^n}^*$.
It is easy to see that, by using Lemma~\ref{lem1}, then
\allowdisplaybreaks
\begin{align*}
 \cW_F(x )\overline{\cW_{cF}(x )}
 &=\sum_{u  \in \F_{p^n}}{_c}\cC_{F}(u )\zeta_p^{-\Tr(ux)} ={_c}\cC_F(0).
 \end{align*}
 
 For the reciprocal, we  assume that $F$ is $c$-differential bent$_2$.   Then, for any $0\neq u\in\F_{p^n}$,
 \allowdisplaybreaks
 \begin{align*}
{_c}\cC_F(u ) &=  p^{-n}\sum_{x  \in \F_{p^n}}\cW_F(x ) \overline{\cW_{cF}(x )}\zeta_p^{\Tr(ux)}\\
& =p^{-n}{_c}\cC_F(0) \sum_{x  \in \F_{p^n}} \zeta_p^{\Tr(ux)}=0,
 \end{align*}
 where we used the property that the exponential sum of a balanced function (in this case $x\mapsto \Tr(ux)$, $u\neq 0$) is zero. 
 \end{proof}
 As for the $0$-differential bent$_1$, we can easily characterize $0$-differential bent$_2$.
\begin{cor}
 Let $F\in\cB_{n,p}^m$, with integers $m,n$, both greater than $1$. Then $F$ is a  $0$-differential bent$_2$ \textup{(}perfect$_2$ $0$-nonlinear\textup{)} function if and only if
$\cW_F(0)=0$. 
\end{cor}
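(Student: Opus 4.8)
The plan is to reduce everything to a one-line computation of the $c$-autocorrelation at $c=0$, together with the theorem just established. Since $p\geq 2$ we have $0\neq 1$ in $\F_{p^m}$, so that theorem applies with $c=0$ and already tells us that $F$ is $0$-differential bent$_2$ if and only if $F$ is perfect$_2$ $0$-nonlinear. Hence it is enough to characterize the latter property in terms of $\cW_F(0)$.

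First I would specialize the definition of the $c$-autocorrelation to $c=0$: for every $u\in\F_{p^n}$,
\[
{_0}\cC_F(u)=\sum_{x\in\F_{p^n}}\zeta_{p^m}^{F(x+u)}.
\]
Then I would perform the substitution $y=x+u$, which is a bijection on $\F_{p^n}$, obtaining ${_0}\cC_F(u)=\sum_{y\in\F_{p^n}}\zeta_{p^m}^{F(y)}$. Recalling that $\cW_F(0)=\sum_{x\in\F_{p^n}}\zeta_{p^m}^{F(x)}\zeta_p^{-\Trn(0\cdot x)}=\sum_{x\in\F_{p^n}}\zeta_{p^m}^{F(x)}$, this shows ${_0}\cC_F(u)=\cW_F(0)$ for \emph{every} $u$, in particular independently of $u$.

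Finally, $F$ is perfect$_2$ $0$-nonlinear precisely when ${_0}\cC_F(u)=0$ for all $u\in\F_{p^n}^*$; by the previous step every such value equals the single number $\cW_F(0)$, so this condition is equivalent to $\cW_F(0)=0$. Combined with the theorem, this yields the claimed equivalence. There is no genuine obstacle here; the only point worth a remark (and the reason this corollary is weaker than Corollary~\ref{cor:0diffbent1}) is that, even when $m=n$, the vanishing of $\cW_F(0)$ does \emph{not} force the $\Z_{p^m}$-valued function $\sigma\circ F$ to be balanced, since a vanishing sum of $p^m$-th roots of unity need not have uniformly distributed exponents — so one does not recover a clean "permutation polynomial'' statement as in the bent$_1$ setting.
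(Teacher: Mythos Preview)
Your proof is correct. Both arguments exploit the same collapse at $c=0$, but from opposite sides of the equivalence theorem. The paper works directly with the bent$_2$ definition $\cW_F(a)\overline{\cW_{0\cdot F}(a)}={_0}\cC_F(0)$: it computes $\cW_{\mathbf 0}(a)=p^n\delta_0(a)$ explicitly, so the left-hand side vanishes for $a\neq 0$, forcing ${_0}\cC_F(0)=\cW_F(0)=0$; the converse is then checked at $a=0$ and $a\neq 0$ separately. You instead invoke the theorem to pass to the perfect$_2$ side and observe, via the one-line substitution $y=x+u$, that ${_0}\cC_F(u)=\cW_F(0)$ for \emph{every} $u$, which makes the equivalence immediate. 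Your route is a touch cleaner (no need to evaluate $\cW_{\mathbf 0}$), while the paper's route has the minor advantage of not depending on the equivalence theorem at all. Your closing remark about why no permutation statement appears here is also apt and matches the paper's surrounding discussion.
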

\begin{proof}
For $c=0$, ${_c}\cC_F(0)=\cW_F(0)$.  
 Since $F$ is a  $0$-differential bent$_2$, then $\displaystyle \cW_F(a)\overline{\cW_{{\bf 0}}(a)}={_0}\cC_F(0)=\cW_F(0)\ \forall a\in\F_{2^n}$. However, $\displaystyle \cW_{{\bf 0}}(a)=\sum_{x\in\F_{p^n}} \zeta_{p}^{-\Trn(ax)}$, which is $0$, if $a\neq 0$, and $p^n$ if $a=0$. Thus, ${_0}\cC_F(0)=\cW_F(0)=0$. Conversely, assuming $\cW_F(0)=0$, the identity   $\displaystyle \cW_F(a)\overline{\cW_{{\bf 0}}(a)} =0$ will hold for all $a\neq 0$ (if  $a\neq 0$, then $\cW_F(a)$ is arbitrary). If $a=0$, then $\cW_F(0)=\sum_{x \in \F_{p^n}} \zeta_{p^m}^{F(x)}=\cW_F(0)=0$.
\end{proof}
 
 By the previous corollary, however, we find that if $m=n$, and $F$ is a permutation on $\F_{p^n}$, then $F$ is always going to be  $0$-differential bent$_2$ (since $\cW_F(0)=0$, if $F$ is a permutation).
 Surely, if $F$ is a permutation, then $F$ is clearly a $0$-differential bent$_2$  (perfect$_2$ $0$-nonlinear) function.
 Moreover, if $L$ is a linearized permutation polynomial on $\F_{p^n}$, then $L$ is a perfect$_2$ $c$-nonlinear function, for all $c\neq 1$. 
 To check that, we compute the autocorrelation of $L$, and get
 \[
 {_c}\cC_L(a)=\sum_{x\in\F_{p^n}} \zeta_{p^n}^{(1-c)L(x)+L(a)}=0,
 \]
 for all~$c\neq 1$, when $L$ is a permutation.
Thus, one can regard the set of $c$-differential bent$_2$ as a superclass of linearized permutation polynomials.
We summarize this discussion below.
\begin{prop}
If $L$ is a linearized permutation polynomial on $\F_{p^n}$, then $L$ is $c$-differential bent$_2$ \textup{(}perfect$_2$ $c$-nonlinear function\textup{)}, for all $c\neq 1$. If $m=n$, and $F$ is a permutation on $\F_{p^n}$, then $F$ is  $0$-differential bent$_2$ \textup{(}perfect$_2$ $0$-nonlinear function\textup{)}.
\end{prop}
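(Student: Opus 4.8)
The plan is to compute the two $c$-autocorrelations appearing in the statement directly, and then to invoke the equivalence — proved just above — between being perfect$_2$ $c$-nonlinear and being $c$-differential bent$_2$ (valid for every $1\neq c$ and $1\leq m\leq n$). So the order would be: (1) reduce $c$-differential bent$_2$ to the vanishing of ${_c}\cC_F(u)$ on $u\neq 0$ via that theorem; (2) do the two short character sums. Throughout, I would keep in mind that in $\zeta_{p^m}^{H(x)}$ the exponent $H(x)$ is evaluated first in the field $\F_{p^m}$ and then transported to $\Z_{p^m}$ by the bijection $\sigma$; in both parts of the statement $m=n$.

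\textbf{First assertion.} Fix $c\neq 1$ and a linearized permutation polynomial $L$ on $\F_{p^n}$. For $a\in\F_{p^n}$, additivity of $L$ gives $L(x+a)-cL(x)=(1-c)L(x)+L(a)$ in $\F_{p^n}$. Since $c\neq 1$, the map $x\mapsto (1-c)L(x)+L(a)$ is a bijection of $\F_{p^n}$ — it is the composition of the permutation $L$, of multiplication by the nonzero scalar $1-c$, and of translation by $L(a)$ — hence $x\mapsto\sigma\bigl((1-c)L(x)+L(a)\bigr)$ is a bijection of $\F_{p^n}$ onto $\Z_{p^n}=\{0,1,\dots,p^n-1\}$, and so
\[
{_c}\cC_L(a)=\sum_{x\in\F_{p^n}}\zeta_{p^n}^{(1-c)L(x)+L(a)}=\sum_{k=0}^{p^n-1}\zeta_{p^n}^{k}=0
\]
for every $a$, in particular for every $a\in\F_{p^n}^*$. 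Thus $L$ is perfect$_2$ $c$-nonlinear, so by the equivalence theorem $L$ is $c$-differential bent$_2$. (Taking $a=0$ above even gives ${_c}\cC_L(0)=0$, so $L$ is strictly perfect$_2$ $c$-nonlinear.)

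\textbf{Second assertion.} Let $m=n$ and $F$ be a permutation of $\F_{p^n}$. For any $u\in\F_{p^n}$ the map $x\mapsto x+u$ is a bijection, hence $x\mapsto F(x+u)$ is a permutation of $\F_{p^n}$ and $x\mapsto\sigma(F(x+u))$ is a bijection onto $\Z_{p^n}$; therefore ${_0}\cC_F(u)=\sum_{x\in\F_{p^n}}\zeta_{p^n}^{F(x+u)}=\sum_{k=0}^{p^n-1}\zeta_{p^n}^{k}=0$ for every $u$. So $F$ is (strictly) perfect$_2$ $0$-nonlinear, and the equivalence theorem gives that $F$ is $0$-differential bent$_2$; alternatively, $\cW_F(0)=\sum_{x\in\F_{p^n}}\zeta_{p^n}^{F(x)}=0$, so the $0$-differential bent$_2$ characterization applies directly.

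The only point calling for care — and it is bookkeeping, not a genuine obstacle — is the interplay with the non-additive map $\sigma$: one must never treat $\sigma$ as a homomorphism, but only use that it is a bijection $\F_{p^n}\to\Z_{p^n}$, so that a full $\F_{p^n}$-orbit in the exponent yields a complete residue system modulo $p^n$ and the resulting sum $\sum_{k=0}^{p^n-1}\zeta_{p^n}^{k}$ vanishes. No estimates, no characteristic-dependent case analysis, and no hypothesis on $p$ or $n$ beyond $p^n\geq 2$ are needed.
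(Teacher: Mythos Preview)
Your proof is correct and follows essentially the same approach as the paper: compute the $c$-autocorrelation directly by noting that $L(x+a)-cL(x)=(1-c)L(x)+L(a)$ is a permutation of $\F_{p^n}$ (so the exponent runs over a full system of residues modulo $p^n$), and handle the permutation case for $c=0$ similarly. The paper's discussion before the proposition does exactly this, and for the second assertion also points to the characterization $\cW_F(0)=0$, which you mention as an alternative; your extra care in emphasizing that $\sigma$ is used only as a bijection and not as a homomorphism is a welcome clarification.
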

 By SageMath, we get other polynomials. For example, $F(x)=x^3$ is perfect$_2$ $0$-nonlinear on $F_{2^3}$; 
$F(x)=x^3+x^5$ is perfect$_2$ $0$-nonlinear on $F_{2^3}$. 
 
 A general way of providing examples of $(n,m)$-functions that are $c$-differential bent$_1$, is to take a function $G$ on $\F_{p^n}$ that is perfect $c$-nonlinear (and so, ${_c}D_aF$ is a permutation) for $c$ in a proper subfield $\F_{p^m}$ of $\F_{p^n}$ and apply the relative trace $\Tr_{\F_{p^n}/\F_{p^m}}$ to it, obtaining $F:\F_{p^n}\to \F_{p^m}$ (for some $m$, which is a divisor of $n$) defined by $F(x)=\Tr_{\F_{p^n}/\F_{p^m}}(G(x))$. We now provide the argument. Since $G$ is PcN with respect to $c$, then $G(x+a)-cG(x)$ is a permutation on $\F_{p^n}$, and $\Tr_{\F_{p^n}/\F_{p^m}}\left(G(x+a)-cG(x) \right)$ is therefore balanced on $\F_{p^m}$. Now, using the fact that $c\in\F_{p^m}$, we obtain that
 $\Tr_{\F_{p^n}/\F_{p^m}}\left(G(x+a)-cG(x) \right)={_c}D_aF(x)$, and so, ${_c}D_aF$ is balanced on $\F_{p^m}$.
 We record this as a proposition. 
  \begin{prop}
 Let  $m\,|\,n$, $m<n$, and $p$ prime. If $G$ is PcN on $\F_{p^n}$ with respect to $c\in\F_{p^m}$,  then $F(x)=\Tr_{\F_{p^n}/\F_{p^m}}(G(x))$ is $c$-differential bent$_2$.
 \end{prop}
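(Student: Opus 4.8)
The plan is to mimic the argument just given for the bent$_1$ analogue, adding the one extra step needed because bent$_2$ is phrased through the character $\zeta_{p^m}$ rather than through traces of component functions. By the theorem characterising perfect$_2$ $c$-nonlinearity as $c$-differential bent$_2$, it is enough to prove that $F$ is perfect$_2$ $c$-nonlinear, i.e. that ${_c}\cC_F(u)=0$ for every $u\in\F_{p^n}^*$; for $c=1$ one instead combines ${_c}\cC_F(u)=0$ ($u\ne0$) with Lemma~\ref{lem2}, since then $\cW_F(x)\overline{\cW_F(x)}=\sum_{u}{_1}\cC_F(u)\zeta_p^{-\Trn(ux)}={_1}\cC_F(0)$.

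First I would rewrite the $c$-derivative of $F$ in terms of that of $G$. Since $c\in\F_{p^m}$ and $\Tr_{\F_{p^n}/\F_{p^m}}$ is $\F_{p^m}$-linear, $c\,\Tr_{\F_{p^n}/\F_{p^m}}(G(x))=\Tr_{\F_{p^n}/\F_{p^m}}(cG(x))$, so
$$
{_c}D_uF(x)=F(x+u)-cF(x)=\Tr_{\F_{p^n}/\F_{p^m}}\bigl(G(x+u)-cG(x)\bigr)=\Tr_{\F_{p^n}/\F_{p^m}}\bigl({_c}D_uG(x)\bigr);
$$
the hypothesis $c\in\F_{p^m}$ is used precisely at this point. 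Now, $G$ being a PcN $(n,n)$-function with respect to $c$, for every $u\ne0$ the map $x\mapsto{_c}D_uG(x)$ is a permutation of $\F_{p^n}$. The relative trace $\Tr_{\F_{p^n}/\F_{p^m}}\colon\F_{p^n}\to\F_{p^m}$ is a surjective $\F_{p^m}$-linear map, hence balanced (each fibre is a coset of its kernel, of size $p^{n-m}$); precomposing a balanced map with a permutation leaves it balanced, so $x\mapsto{_c}D_uF(x)$ takes every value of $\F_{p^m}$ exactly $p^{n-m}$ times.

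Finally I would transport this balancedness through the bijection $\sigma\colon\F_{p^m}\to\Z_{p^m}$ that enters the definition of ${_c}\cC_F$. As $\sigma$ is a bijection, when $x$ ranges over $\F_{p^n}$ the integer $\sigma({_c}D_uF(x))\in\Z_{p^m}$ takes each residue exactly $p^{n-m}$ times, whence
$$
{_c}\cC_F(u)=\sum_{x\in\F_{p^n}}\zeta_{p^m}^{{_c}D_uF(x)}=p^{n-m}\sum_{j=0}^{p^m-1}\zeta_{p^m}^{j}=0
$$
for all $u\ne0$. This gives perfect$_2$ $c$-nonlinearity, hence $c$-differential bent$_2$. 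I do not foresee a genuine obstacle: the proof is a straight transcription of the bent$_1$ case, and the only points meriting attention are invoking $c\in\F_{p^m}$ so that $c$ commutes past the relative trace (the first display fails otherwise), and the routine facts that balancedness survives the identification $\sigma$ and that a balanced function precomposed with a permutation stays balanced.
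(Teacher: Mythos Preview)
Your proof is correct and follows essentially the same approach as the paper: show ${_c}D_uF=\Tr_{\F_{p^n}/\F_{p^m}}({_c}D_uG)$ using $c\in\F_{p^m}$, use that ${_c}D_uG$ is a permutation to conclude ${_c}D_uF$ is balanced, and deduce perfect$_2$ $c$-nonlinearity. You are in fact slightly more explicit than the paper, which stops at ``${_c}D_aF$ is balanced on $\F_{p^m}$'' and leaves the passage to ${_c}\cC_F(u)=0$ (your final displayed computation via $\sigma$) implicit.
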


 We now discuss some of the differential properties of a perfect$_2$ $c$-nonlinear function.
  \begin{thm}
 Let $m,n$ be positive integers and $p$ a prime number, $F\in\cB_{n,p}^m$ and, for all  $0\leq j\leq p^m-1$, we let  $S_{j,c}^u=\{x\in\F_{p^n}\,|\, F(x+u)-cF(x)=j\}$. Then $F$ is a perfect$_2$ $c$-nonlinear function \textup{(}$c$-differential bent$_2$\textup{)} if and only if its output values satisfy $|S_{j+p^{m-1}\ell,c}^u|=|S_{j+p^{m-1}(p-1),c}^u|$, for all $0\leq j\leq p^{m-1}-1$, and $0\leq \ell\leq p-1$.
 \end{thm}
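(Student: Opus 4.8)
The plan is to reduce the statement, through the definition of perfect$_2$ $c$-nonlinearity, to the vanishing of the $c$-autocorrelations ${_c}\cC_F(u)$ at $u\neq0$, and then to convert that into the stated counting condition by means of the cyclotomic polynomial $\Phi_{p^m}$. Recall that $F$ is perfect$_2$ $c$-nonlinear — equivalently, when $m\le n$ and $c\neq1$, $c$-differential bent$_2$ — precisely when ${_c}\cC_F(u)=0$ for every $u\in\F_{p^n}^*$, so it suffices to characterize, for a fixed $u\in\F_{p^n}^*$, when ${_c}\cC_F(u)=0$.

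First I would fix $u\in\F_{p^n}^*$ and sort the defining sum according to the value of $F(x+u)-cF(x)$ in $\Z_{p^m}$. Since, by construction, the $p^m$ sets $S_{j,c}^u$ with $0\le j\le p^m-1$ partition $\F_{p^n}$, collecting equal summands gives
\[
{_c}\cC_F(u)=\sum_{x\in\F_{p^n}}\zeta_{p^m}^{\,F(x+u)-cF(x)}=\sum_{j=0}^{p^m-1}|S_{j,c}^u|\,\zeta_{p^m}^{\,j}=g_u(\zeta_{p^m}),
\]
where $g_u(X)=\sum_{j=0}^{p^m-1}|S_{j,c}^u|\,X^j\in\Z[X]$ has degree at most $p^m-1$. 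So everything reduces to deciding when this integer combination of $p^m$-th roots of unity vanishes.

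Next I would invoke that the minimal polynomial of $\zeta_{p^m}$ over $\Q$ is $\Phi_{p^m}(X)=1+X^{p^{m-1}}+X^{2p^{m-1}}+\cdots+X^{(p-1)p^{m-1}}$, which is monic of degree $p^{m-1}(p-1)=p^m-p^{m-1}$. Hence $g_u(\zeta_{p^m})=0$ iff $\Phi_{p^m}\mid g_u$ in $\Q[X]$; and since $\Phi_{p^m}$ is monic with integer coefficients and $\deg g_u\le p^m-1$, this is equivalent to $g_u=\Phi_{p^m}\,q$ for some $q\in\Z[X]$ with $\deg q\le p^{m-1}-1$. Writing $q(X)=\sum_{j=0}^{p^{m-1}-1}q_jX^j$ and expanding,
\[
\Phi_{p^m}(X)\,q(X)=\sum_{\ell=0}^{p-1}\sum_{j=0}^{p^{m-1}-1}q_j\,X^{\,j+\ell p^{m-1}},
\]
and because every $i$ with $0\le i\le p^m-1$ is uniquely of the form $i=j+\ell p^{m-1}$ with $0\le j\le p^{m-1}-1$, $0\le\ell\le p-1$, the coefficient of $X^{\,j+\ell p^{m-1}}$ on the right is $q_j$, independent of $\ell$. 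Comparing coefficients with $g_u$, the factorization $g_u=\Phi_{p^m}q$ holds for some $q$ iff $|S_{j+\ell p^{m-1},c}^u|$ depends on $j$ only, i.e.\ iff $|S_{j+p^{m-1}\ell,c}^u|=|S_{j+p^{m-1}(p-1),c}^u|$ for all $0\le j\le p^{m-1}-1$ and $0\le\ell\le p-1$. Letting $u$ range over $\F_{p^n}^*$ then gives the theorem.

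The only book-keeping points are that the displayed equalities are to be read for every $u\in\F_{p^n}^*$ (the statement leaves this implicit); that nonnegativity of the $|S_{j,c}^u|$ imposes nothing extra, since the common value $q_j$ is automatically a nonnegative integer once it is a coefficient of $g_u$; and that for $m=1$ this recovers the balancedness criterion of Theorem~\ref{thm:bal_diff}. The one step I expect to deserve an explicit sentence is the divisibility reduction — that $g_u(\zeta_{p^m})=0$ with $\deg g_u\le p^m-1$ forces the \emph{exact} identity $g_u=\Phi_{p^m}q$ and not merely a congruence modulo $X^{p^m}-1$ — but this is immediate from $\Phi_{p^m}$ being the minimal polynomial together with the degree bound.
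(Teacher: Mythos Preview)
Your proof is correct. Both your approach and the paper's rest on the irreducibility of $\Phi_{p^m}$, but the organization differs. The paper works through the tower $\mathbb{Q}\subset\mathbb{Q}(\zeta_p)\subset\mathbb{Q}(\zeta_{p^m})$: after writing ${_c}\cC_F(u)=\sum_{j=0}^{p^{m-1}-1}\bigl(\sum_{\ell=0}^{p-1}\zeta_p^\ell\,|S_{j+p^{m-1}\ell,c}^u|\bigr)\zeta_{p^m}^j$, it first uses that $\{\zeta_{p^m}^j:0\le j\le p^{m-1}-1\}$ is a $\mathbb{Q}(\zeta_p)$-basis of $\mathbb{Q}(\zeta_{p^m})$ to force each inner sum to vanish, and then that $\{\zeta_p^\ell:0\le\ell\le p-2\}$ is a $\mathbb{Q}$-basis of $\mathbb{Q}(\zeta_p)$ to finish; this prompts a separate treatment of $p=2$ and $p>2$. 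You bypass the tower by observing directly that $g_u(\zeta_{p^m})=0$ is equivalent to $\Phi_{p^m}\mid g_u$, and that the degree bound $\deg g_u\le p^m-1$ forces $g_u=\Phi_{p^m}\,q$ with $\deg q\le p^{m-1}-1$, whence the coefficient comparison is immediate. Your route is uniform in $p$ and a touch cleaner; the paper's two-step version makes the role of the intermediate field $\mathbb{Q}(\zeta_p)$ more visible but is otherwise equivalent.
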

 \begin{proof}
   To show our claim, we order $\F_{p^m}=\{\alpha_0=0,\alpha_1=1,\alpha_2,\ldots,\alpha_{p^m-2}\}$, such that $\sigma(u_j)=j\in\Z_{p^m}$ (the bijective map $\sigma$ was defined in the beginning of this section). With $c\in\F_{p^n}$ constant, for all  $0\leq j\leq p^m-1$, we let  $S_{j,c}^u=\{x\in\F_{p^n}\,|\, F(x+u)-cF(x)=j\}$.   We will use below that the order of the cyclotomic polynomial of index $p^k$ is  $\phi(p^k)=p^{k-1}(p-1)$, for all $k>0$.
  
    If $p=2$ and $u\neq 0$,  since $\zeta_{2^m}^{2^{m-1}+j}=- \zeta_{2^m}^j$, then
 \allowdisplaybreaks
 \begin{align*}
 0={_c}\cC_F(u )
 &=\sum_{x \in \F_{2^n}} \zeta_{2^m}^{F(x+u)  - c  F(x )}=\sum_{j=0}^{2^m-1} |S_{j,c}^u| \zeta_{2^m}^{j}\\
 & =  \sum_{j=0}^{2^{m-1}-1} \left(|S_{j,c}^u|- |S_{j+2^{m-1},c}^u|\right) \zeta_{2^m}^{j},
 \end{align*}
 which will render $|S_{j,c}^u|= |S_{j+2^{m-1},c}^u|$, since $\left\{\zeta_{2^m}^j\,|\,0\leq j\leq 2^{m-1}-1\right\}$ forms a basis for the cyclotomic field $\mathbb{Q}\left(\zeta_{2^m}\right)$.
 
 If $p>2$ and $u\neq 0$, then $\zeta_{p^m}^{\ell p^{m-1}+j}=\zeta_p^\ell \zeta_{p^m}^j$, for $0\leq \ell\leq p-1$, and 
 \allowdisplaybreaks
 \begin{align*}
0={_c}\cC_F(u )
 &=\sum_{x \in \F_{p^n}} \zeta_{p^m}^{F(x+u)  - c  F(x )}=\sum_{j=0}^{p^m-1} |S_{j,c}^u| \zeta_{p^m}^{j}\\
 & =  \sum_{j=0}^{p^{m-1}-1}\left(\sum_{\ell=0}^{p-1}  \zeta_p^\ell |S_{j+p^{m-1}\ell,c}^u| \right) \zeta_{p^m}^{j}.
 \end{align*}
 The  extension $\mathbb{Q}(\zeta_p)\stackrel{p^{m-1}}{\hookrightarrow}\mathbb{Q}(\zeta_{p^m})$ has degree $p^{m-1}$ and the following set $\left\{\zeta_{p^m}^j\,|\,0\leq j\leq p^{m-1}-1\right\}$  forms a basis of $ \mathbb{Q}(\zeta_{p^m})$ over  $\mathbb{Q}(\zeta_p)$, therefore the coefficients in the displayed expression are zero. That is,   for all $0\leq j\leq p^{m-1}-1$,
 \[
 \sum_{\ell=0}^{p-1}  \zeta_p^\ell |S_{j+p^{m-1}\ell,c}^u| =0.
 \]
Again, using that the set $\left\{\zeta_{p}^j\,|\,0\leq j\leq p-2\right\}$ forms a basis for the cyclotomic field $\mathbb{Q}\left(\zeta_{p}\right)$ over $\mathbb{Q}$ and that $\zeta_p^{p-1}=-(1+\zeta_p+\cdots+\zeta_p^{p-2})$, we get
\[
 \sum_{\ell=0}^{p-1}  \zeta_p^\ell \left( |S_{j+p^{m-1}\ell,c}^u|-|S_{j+p^{m-1}(p-1),c}^u|\right) =0,\text{ for all $0\leq j\leq p^{m-1}-1$},
 \]
from which
we infer that $|S_{j+p^{m-1}\ell,c}^u|=|S_{j+p^{m-1}(p-1),c}^u|$, for $0\leq j\leq p^{m-1}-1$, and $0\leq \ell\leq p-1$. 
If the previous condition will hold, by reversing the argument, we find that $F$ is perfect$_2$ $c$-nonlinear.
\end{proof}

\section{Concluding remarks}
\label{sec4}

In this paper we define two different cross/autocorrelations for vectorial $p$-ary $(n,m)$-functions and the corresponding concepts of perfect $c$-nonlinear and $c$-differential bent functions in this  context. We   show that $c$-differential bent functions correspond to perfect $c$-nonlinear functions, thus extending Nyberg's classical result~\cite{Nyb91}. Observe that if $m=1$, the two $c$-differential bent concepts coincide with the classical bent notion~\cite{Ku85}, so the new definitions can be regarded as generalizations in two different directions. We only concentrated here on a few classes of functions (Maiorana-McFarland, Gold, Coulter-Matthews and Dembowski-Ostrom polynomials)  and investigated their $c$-differential bent properties (mostly, for the first bent type). It would be interesting to check other classes of functions for their $c$-differential bent$_1$  or bent$_2$ properties.

 \end{document}